\journal{Journal of \LaTeX\ Templates}
\newtheorem{theorem}{Theorem}
\newtheorem{lemma}[theorem]{Lemma}
\newtheorem{example}[theorem]{Example}
\newcommand{\GRS}{{\mathrm{GRS}}}
\newcommand{\C}{{\mathcal{C}}}
\newcommand{\F}{{\mathbb{F}}}
\begin{document}

\begin{frontmatter}

\title{New MDS self-dual codes over finite field $\F_{r^2}$}
\tnotetext[mytitlenote]{This research work is supported by the National Natural Science Foundation of China under Grant Nos. U21A20428 and 12171134.}

\author[mymainaddress]{Ruhao Wan}
\ead{wanruhao98@163.com}

\author[mymainaddress]{Yang Li}
\ead{yanglimath@163.com}

\author[mymainaddress]{Shixin Zhu\corref{mycorrespondingauthor}}
\cortext[mycorrespondingauthor]{Corresponding author}
\ead{zhushixinmath@hfut.edu.cn}

\address[mymainaddress]{School of Mathematics, HeFei University of Technology, Hefei 230601, China}

\begin{abstract}
    MDS self-dual codes have nice algebraic structures and are uniquely determined by lengths. Recently, the construction of
	MDS self-dual codes of new lengths has become an important and hot issue in coding theory. In this paper, we develop the
	existing theory and construct six new classes of MDS self-dual codes. Together with our constructions, the proportion
	of all known MDS self-dual codes relative to possible MDS self-dual codes generally exceed 57\%. As far as we know, this is the
	largest known ratio. Moreover, some new families of MDS self-orthogonal codes and MDS almost self-dual codes are also constructed.
\end{abstract}

\begin{keyword}
MDS self-dual code\sep generalized Reed-Solomon code\sep extended generalized Reed-Solomon code
\end{keyword}

\end{frontmatter}

\section{Introduction}\label{sec1}
Let $q=r^2$, where $r$ is an odd prime power. Denote $\F_q$ be the finite field with $q$ elements. An $[n,k,d]$ linear code
$\C$ over $\F_q$, denoted by $[n,k,d]_q$, can be seen as a $k$-dimensional subspace of $\F_q^n$ with minimum distance $d$. There
are many tradeoffs between $n$, $k$ and $d$. One of the most interesting tradeoffs is so-called the Singleton bound, which yields
$d\leq n-k+1$. When the equation is established, i.e., $d=n-k+1$, $\C$ is called an maximum distance separable (MDS) code. The Euclidean
inner product of two vectors $\mathbf{x}=(x_{1},x_{2},...,x_{n})$ and $\mathbf{y}=(y_{1},y_{2},...,y_{n})$ in $\F_{q}^{n}$ is defined
by $$\mathbf{x}\cdot\mathbf{y}=\sum _{i=1}^{n} x_{i}y_{i}.$$ With respect to Euclidean inner product, we define the dual code of
$\C$ as
\begin{center}
$\C^{\bot}=\{\mathbf{x}\in  \F_{q}^{n}:\ \mathbf{x}\cdot\mathbf{c}=0,\ \forall\ \mathbf{c}\in \C\}.$
\end{center}
If $\C\subseteq \C^{\bot}$, we call $\C$ (Euclidean) self-orthogonal. If $\C=\C^{\bot}$, we call $\C$ (Euclidean) self-dual.
In particular, if $\C$ is MDS and $\C=\C^{\bot}$, $\C$ is called an MDS (Euclidean) self-dual code. Clearly, the dimension $k$ and minimal
distance $d$ of an MDS self-dual code are uniquely determined by its length $n$. Furthermore, the length of a self-dual code can only be even.
A relatively weak case is almost self-dual. Usually, we call $\C$ an almost self-dual code if $\C \subseteq \C^{\bot}$ and
${\rm dim}(\C^{\bot})={\rm dim}(\C)+1$, where $\dim(\C)$ denotes the dimension of $\C$. Some constructions of almost self-dual codes were given
and the reader may refer to \cite{RefJ18,RefJ20}.

Due to the nice algebraic structures, MDS self-dual codes have wide and important applications in distributed storage systems \cite{RefJ2},
linear secret sharing schemes \cite{RefJ1,RefJ5} and unimodular integer lattices \cite{RefJ xin}. What also needs to be emphasized is
that some outstanding works were given in \cite{RefJ11,RefJ12,RefJ17}, which greatly promoted the research enthusiasm for the construction of MDS
self-dual codes. Of course, as we pointed above, all parameters of an MDS self-dual code are completely determined by its length. Therefore,
the main problem of the construction of MDS self-dual codes is which length a $q$-ary MDS self-dual code exists. For the case even $q$, the problem
was completely solved in \cite{RefJ3}. It means that for constructing new MDS self-dual codes, it is sufficient to consider the case where
$q$ is an odd prime power. For this topic, there has been a lot of results based on different methods and we can summarize them as follows.


In \cite{RefJ7}, a criterion for generalized Reed-Solomon (GRS) codes being MDS self-dual codes were introduced by Jin et al. and were developed by
Fang et al. and Yan to extended GRS codes in \cite{RefJ4,RefJ10}. After this, Zhang et al. proposed a unified method to construct MDS self-dual
codes via GRS codes. By selecting suitable evaluation sets of (extended) GRS codes, MDS self-dual codes can also be explictly derived in some ways.
Specifically, in \cite{RefJ19} the authors considered the evaluation set to be a subgroup of some finite field, where its coset belongs to a bigger subgroup.
In \cite{RefJ18}, Fang considered the union of two disjoint multiplicative subgroups and took their cosets as the evaluation sets.
Then the conclusions in \cite{RefJ18} were further generalized for new families of MDS self-dual codes with flexible paramenters in \cite{RefJ20}.
And in \cite{RefJ16}, the authors constructed the evaluation set via two multiplicative subgroups with nonempty intersections and their cosets.
In addition to the methods mentioned above, the approaches of constructing MDS self-dual codes through linear subspace and long MDS
self-dual codes from short codes can be found in \cite{RefJ6} and \cite{RefJ9}, respectively. In general, no matter which method is used, a large number
of MDS self-dual codes can be constructed. In Table \ref{tab:2}, we list some known constructions.


\newcommand{\tabincell}[2]{\begin{tabular}{@{}#1@{}}#2\end{tabular}}
\begin{table}
\caption{Some known classes of $q$-ary MDS self-dual codes of even length $n$, where $q=r^2$}
\label{tab:2}
\begin{center}
\resizebox{\textwidth}{105mm}{
	\begin{tabular}{ccc}
		\hline
		Class  & $n$ even & References\\
		\hline
		1   &  $2\leq n \leq 2r$& \cite{RefJ13}  \\
		2   & $n<q-1$, $n\mid (q-1)$ or $(n-1)\mid (q-1)$ or $(n-2)\mid (q-1)$& \cite{RefJ10}  \\
		3   & $n=tr$, $1\leq t\leq r$, $t$ even & \cite{RefJ10}  \\
		4   & $n=tr+1$, $1\leq t\leq r$, $t$ odd & \cite{RefJ10}  \\
		5   & $4^nn^2\leq q$ & \cite{RefJ7}  \\
		6   & $n=tm$,   $tm$ even,  $1\leq t\leq \frac{r-1}{\gcd(r-1,m)}$, $\frac{q-1}{m}$ even       &   \cite{RefJ14}  \\
		7   & $n=tm+1$, $tm$ odd, $2\leq t\leq \frac{r-1}{\gcd(r-1,m)}$ and $m\mid q-1$ & \cite{RefJ14}  \\
		8   & $n=tm+2$, $tm$ even, $2\leq t\leq \frac{r-1}{\gcd(r-1,m)}$ and $m\mid q-1$&  \cite{RefJ14}  \\
		9   & $n=tm$,   $tm$ even,   $1\leq t\leq \frac{r+1}{\gcd(r+1,m)}$, $\frac{q-1}{m}$ even       & \cite{RefJ15}\\
	    10  & $n=tm+1$, $tm$ odd, $2\leq t\leq \frac{r+1}{2\gcd(r+1,m)}$ and $m\mid q-1$& \cite{RefJ15}\\
		11  & \tabincell{c}{$n=tm+2$, $tm$ even(except when $t$ is even, $m$ is even \\and $r\equiv 1($mod$\,4)$), $1\leq t\leq \frac{r+1}{\gcd(r+1,m)}$ and $m\mid q-1$}& \cite{RefJ15}\\
		12  & $n=tm$, $1\leq t\leq \frac{s(r-1)}{\gcd(s(r-1),m)}$, $s$ even, $s\mid m$, $\frac{r+1}{s}$ and $\frac{q-1}{m}$ even& \cite{RefJ15}\\
		13  & $n=tm+2$, $1\leq t\leq \frac{s(r-1)}{\gcd(s(r-1),m)}$, $s$ even, $s\mid m$, $\frac{r+1}{s}$ and $\frac{q-1}{m}$ & \cite{RefJ15}\\
		14  & $n=tm$, $1\leq t\leq \frac{r-1}{n_2}$, $m\mid q-1$, $n_2=\frac{r+1}{\gcd(r+1,m)}$ even & \cite{RefJ19}\\
		15  & $n=tm+2$, $1\leq t\leq \frac{r-1}{n_2}$, $m\mid q-1$, $n_2=\frac{r+1}{\gcd(r+1,m)}$ & \cite{RefJ19}\\
		16  & \tabincell{c}{$n=tm$, $1\leq t\leq \frac{r+1}{n_2}$, $m\mid q-1$ even, $n_1=\gcd(r-1,m)$, \\$n_2=\frac{r-1}{n_1}$ and $\frac{r-1}{n_1}+tn_2$ even }& \cite{RefJ19}\\
		17  & \tabincell{c}{$n=tm+2$, $1\leq t\leq \frac{r+1}{n_2}-1$, $m\mid q-1$, $n_2=\frac{r-1}{\gcd(r-1,m)}$, \\$n_2$ even and $\frac{(r+1)(t-1)}{2}$ even, or $n_2$ odd and $t$ even} & \cite{RefJ19} \\
		18  & $n=s(r-1)+t(r+1)$, $s$ even, $1\leq s\leq \frac{r+1}{2}$, $1\leq t\leq \frac{r-1}{2}$, $r\equiv 1({\rm mod}~4)$ & \cite{RefJ18}\\
        19  & $n=s(r-1)+t(r+1)$, $s$ odd, $1\leq s\leq \frac{r+1}{2}$, $1\leq t\leq \frac{r-1}{2}$, $r\equiv 3({\rm mod}~4)$ & \cite{RefJ18}\\
        20  & \tabincell{c}{$n=s\frac{q-1}{a}+t\frac{q-1}{b}$, $1\leq s\leq \frac{a}{\gcd(a,b)}$, $1\leq t\leq \frac{b}{\gcd(a,b)}$,
        $2a\mid b(r+1)$, \\ $2b\mid a(r-1)$, $a\equiv 2({\rm mod}~4)$, $b$ even, $s$ even and $r\equiv 1({\rm mod}~4)$}& \cite{RefJ20}\\
        21  & \tabincell{c}{$n=s\frac{q-1}{a}+t\frac{q-1}{b}$, $1\leq s\leq \frac{a}{\gcd(a,b)}$, $1\leq t\leq \frac{b}{\gcd(a,b)}$,
        $2a\mid b(r+1)$, \\ $2b\mid a(r-1)$, $b\equiv 2({\rm mod}~4)$, $a$ even, $\frac{(r+1)b}{2a}s^2$ odd and $r\equiv 3({\rm mod}~4)$}& \cite{RefJ20}\\
        22 &  \tabincell{c}{$n=s\frac{q-1}{a}+t\frac{q-1}{b}+2$, $1\leq s\leq \frac{a}{\gcd(a,b)}$, $1\leq t\leq \frac{b}{\gcd(a,b)}$,
        $2a\mid b(r+1)$,\\ $2b\mid a(r-1)$, $a\equiv 2({\rm mod}~4)$, $b$ even, $s$ odd and $r\equiv 1({\rm mod}~4)$}& \cite{RefJ20}\\
        23 &  \tabincell{c}{$n=s\frac{q-1}{a}+t\frac{q-1}{b}+2$, $1\leq s\leq \frac{a}{\gcd(a,b)}$, $1\leq t\leq \frac{b}{\gcd(a,b)}$,
        $2a\mid b(r+1)$, \\ $2b\mid a(r-1)$, $b\equiv 2({\rm mod}~4)$, $a$ even, $\frac{(r+1)b}{2a}s^2$ even and $r\equiv 3({\rm mod}~4)$}& \cite{RefJ20}\\
        24 &  \tabincell{c}{$n=s\frac{q-1}{\mu}+t\frac{q-1}{\nu}-\frac{2(q-1)\gcd(\mu,\nu)}{\mu\nu}st$, $1\leq s\leq \frac{\mu}{\gcd(\mu,\nu)}$, $1\leq t\leq \frac{\nu}{\gcd(\mu,\nu)}$, \\
        $\mu\mid \nu(r+1)$, $\nu\mid \mu(r-1)$, $n$, $\mu$, $\frac{(r+1)\nu}{\mu}s+\nu$ even, $r\equiv 3({\rm mod}~4)$} & \cite{RefJ16}\\
        25 &  \tabincell{c}{$n=s\frac{q-1}{\mu}+t\frac{q-1}{\nu}-\frac{2(q-1)\gcd(\mu,\nu)}{\mu\nu}st+1$, $1\leq s\leq \frac{\mu}{\gcd(\mu,\nu)}$, $1\leq t\leq \frac{\nu}{\gcd(\mu,\nu)}$, \\
        $\mu\mid \nu(r+1)$, $\nu\mid \mu(r-1)$, $n$ odd, $\mu$, $\frac{(r+1)\nu}{\mu}s+\nu$ and $\frac{\nu(r+1)}{\mu}\frac{s(s-1)}{2}$ even, $r\equiv 3({\rm mod}~4)$}& \cite{RefJ16}\\
        26 &  \tabincell{c}{$n=s\frac{q-1}{\mu}+t\frac{q-1}{\nu}-\frac{2(q-1)\gcd(\mu,\nu)}{\mu\nu}st+2$, $1\leq s\leq \frac{\mu}{\gcd(\mu,\nu)}$, $1\leq t\leq \frac{\nu}{\gcd(\mu,\nu)}$, \\
        $\mu\mid \nu(r+1)$, $\nu\mid \mu(r-1)$, $n$, $\frac{(r+1)\nu}{\mu}s$ and $\frac{\nu(r+1)}{\mu}\frac{s(s-1)}{2}$ even, $r\equiv 3({\rm mod}~4)$}& \cite{RefJ16}\\
		\hline
	\end{tabular}}
    \begin{tablenotes}
     \footnotesize
    \item Note: in \cite{RefJ16}, many MDS self-dual codes over $\F_q$ with $q=r^2$ were also introduced.
    \end{tablenotes}
\end{center}
\end{table}

\begin{table}
\caption{Our results}
\label{tab:3}
\begin{center}
\resizebox{\textwidth}{21mm}{
	\begin{tabular}{cc}
		\hline
		 $n$ even & References\\
		\hline
		  $n=sf_1+tf_2$, $2e_2\mid e_1(r-1)$, $e_1\mid e_2(r+1)$ and $4\mid (s-1)(r+1)$  & Theorem \ref{th 2}(1)\\
		  $n=sf_1+tf_2+1$, $2e_2\mid e_1(r-1)$, $e_1\mid e_2(r+1)$ and $4\mid (s-1)(r+1)$  & Theorem \ref{th 2}(2)\\
		  $n=sf_1+tf_2+2$, $2e_2\mid e_1(r-1)$, $e_1\mid e_2(r+1)$ and $4\mid (s-1)(r+1)$ & Theorem \ref{th 2}(3)\\
		  $n=sf_1+tf_2$, $2e_2\mid e_1(r+1)$, $e_1\mid e_2(r-1)$ and $\frac{r+1}{2}(\frac{te_1}{e_2}+1)$ even & Theorem\ref{th 3}(1)\\
		  $n=sf_1+tf_2+1$, $2e_2\mid e_1(r+1)$, $e_1\mid e_2(r-1)$, $\frac{r+1}{2}(\frac{te_1}{e_2}+t)$ even and $4\mid (t-1)(r+1)$ & Theorem \ref{th 3}(2)\\
		  $n=sf_1+tf_2+2$, $2e_2\mid e_1(r+1)$, $e_1\mid e_2(r-1)$, $\frac{r+1}{2}(\frac{te_1}{e_2}+t)$ even and $4\mid (t-1)(r+1)$ & Theorem \ref{th 3}(3)\\
		\hline
	\end{tabular}}
    \begin{tablenotes}
     \footnotesize
    \item In this table: $q=r^2$, $q-1=e_1f_1=e_2f_2$, $e_1\equiv 2^l({\rm mod}~2^{l+1})$, $2^l\mid e_2$, where $l\geq 2$, $D_1=\frac{e_1}{\gcd(e_1,e_2)}$, $D_2=\frac{e_2}{\gcd(e_1,e_2)}$, $1\leq s\leq D_1$ and $1\leq t\leq D_2$.
    \end{tablenotes}
\end{center}
\end{table}

Inspired and motivated by these works, especially the idea introduced in \cite{RefJ20}, we study MDS self-dual codes of new lengths in this paper.
As a result, several new classes of MDS self-dual codes are constructed and listed in Table \ref{tab:3}.
Compared with Table \ref{tab:2}, our MDS self-dual codes have more flexible parameters.

The rest of this paper is organized as follows.
In Section \ref{sec2}, we briefly introduce some basic notations and results on (extended) GRS codes and MDS self-dual codes.
In Section \ref{sec3}, the constructions of new MDS self-dual codes via (extended) GRS codes are discussed.
Detailed comparisons of our conclusions and previous results, including some concrete examples, are given in Section \ref{sec4}.
Moreover, some new families of MDS self-orthogonal codes and MDS almost self-dual codes are presented in Section \ref{sec5}.
Finally, Section \ref{sec6} concludes this paper.

\section{Preliminaries}\label{sec2}

We now recall some basic properties and important results about (extended) GRS codes. Throughout this paper, we always assume that $q=r^2$,
where $r$ is an odd prime power.

Let $\F_{q}$ be the finite field with $q$ elements.
Choose $\mathbf{a}=(\alpha_{1},\alpha_{2},\cdots,\alpha_{n})$
to be an $n$-tuple of distinct elements of $\F_q$.
Put $\mathbf{v}=(v_{1},v_{2},\cdots,v_{n})$ with $v_i\in\F_q^*$.
For an integer $k$ satisfying $0\leq k\leq n\leq q$, the generalized
Reed-Solomon (GRS) code associated to $\mathbf{a}$ and $\mathbf{v}$ is defined by
\begin{equation}\label{eq1}
\begin{split}
GRS_{k}(\mathbf{a},\mathbf{v})=\{(v_{1}f(\alpha_{1}),v_{2}f(\alpha_{1}),\cdots,v_{n}f(\alpha_{n}):\ f(x)\in \F_{q}[x],\ \deg(f(x))\leq k-1\}.
\end{split}
\end{equation}
It is well known that $GRS_{k}(\mathbf{a},\mathbf{v})$ is an $[n,k,n-k+1]_q$ MDS code. And its dual code is also MDS.

Let $\eta(x)$ be the quadratic character of $\F_q^*$ and $\mathbf{a}=(\alpha_{1},\alpha_{2},\cdots,\alpha_{n})\in \F_q^n$ be the same as before.
Denote
\begin{equation}
f_{\mathbf{a}}(x)=\prod_{\alpha\in \mathbf{a}}(x-\alpha) \quad {\rm{and}} \quad L_{\mathbf{a}}(\alpha_i)=\prod_{1\leq j\leq n,j\neq i}(\alpha_i-\alpha_j),
\end{equation}
which will be used frequently in this paper.
In the same way, the above definitions can be extended to any subset of $\F_q$, whose elements are different from each other.
For any polynomial $f(x)=\sum a_ix^i\in \F_q[x]$, the derivative of $f(x)$ is $f'(x)=\sum ia_ix^{i-1}\in \F_q[x]$.
Then for any $\alpha_i\ (1\leq i\leq n)$, we have $L_{\mathbf{a}}(\alpha_i)=f'_{\mathbf{a}}(\alpha_i)$.

The following lemma provides a criterion for a GRS code being self-dual.
\begin{lemma}(\cite{RefJ7})\label{rem1}
Let notations be the same as before. For an even integer $n$ and $k=\frac{n}{2}$, if $\eta(L_{\mathbf{a}}(\alpha_{i}))$ are the same for all $1\leq i \leq n$,
then $GRS_k(\mathbf{a},\mathbf{v})$ defined in Eq. (\ref{eq1}) is an MDS self-dual code.
\end{lemma}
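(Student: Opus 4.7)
My plan would be to reduce the statement to the standard duality formula for generalized Reed--Solomon codes followed by a direct existence argument for the multiplier vector $\mathbf{v}$. As a first step I would invoke the well-known identity
\[
GRS_k(\mathbf{a},\mathbf{v})^{\perp} = GRS_{n-k}(\mathbf{a},\mathbf{v}'), \qquad v_i' = \bigl(v_i\, L_{\mathbf{a}}(\alpha_i)\bigr)^{-1},
\]
which can be verified in a few lines using the Lagrange-interpolation identity $\sum_{i=1}^{n} f(\alpha_i)g(\alpha_i)/L_{\mathbf{a}}(\alpha_i)=0$, valid whenever $\deg(fg) < n-1$; this already gives the inclusion of the right-hand code in the dual, and a dimension count closes it.

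Next, since $k=n/2$, both $GRS_k(\mathbf{a},\mathbf{v})$ and $GRS_k(\mathbf{a},\mathbf{v}')$ sit in the same family of codes sharing the evaluation tuple $\mathbf{a}$. Observing that each such code contains its own multiplier vector (take $f\equiv 1$), a short polynomial-degree argument shows that $GRS_k(\mathbf{a},\mathbf{u}_1)=GRS_k(\mathbf{a},\mathbf{u}_2)$ forces $\mathbf{u}_2=c\,\mathbf{u}_1$ for some $c\in\F_q^*$: if $u_{1,i}=u_{2,i}g(\alpha_i)$ and $u_{2,i}=u_{1,i}h(\alpha_i)$ with $\deg g,\deg h<k$, then $gh-1$ has degree $<2k-1=n-1$ and vanishes at $n$ points, hence $gh\equiv 1$, forcing $g$ to be a nonzero constant. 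Applied to $\mathbf{u}_1=\mathbf{v}$ and $\mathbf{u}_2=\mathbf{v}'$, self-duality becomes equivalent to the existence of $c\in\F_q^*$ satisfying
\[
v_i^2\, L_{\mathbf{a}}(\alpha_i) = c^{-1} \qquad (1 \le i \le n).
\]

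To close the proof I would exploit the quadratic-character hypothesis. Let $\eta_0$ denote the common value of $\eta(L_{\mathbf{a}}(\alpha_i))$ and pick $\lambda \in \F_q^*$ with $\eta(\lambda)=\eta_0$: take $\lambda = 1$ when $\eta_0 = 1$ and any nonsquare when $\eta_0 = -1$. Setting $c^{-1} = \lambda$, every $\lambda L_{\mathbf{a}}(\alpha_i)^{-1}$ has trivial quadratic character and is therefore a nonzero square in $\F_q$, so a square root $v_i\in\F_q^*$ can be extracted coordinatewise. The resulting $GRS_{n/2}(\mathbf{a},\mathbf{v})$ is self-dual by Step 2, and since every GRS code attains the Singleton bound, it is automatically MDS.

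The step that demands the most care is the uniqueness-up-to-scalar assertion for the multiplier vector in the second paragraph, since it hinges on the exact degree bound $\deg(gh) < n-1$ available only when both factors have degree $< k=n/2$; outside that regime the claim fails. The remaining ingredients, namely the dual formula and the existence of square roots dictated by $\eta$, are routine manipulations inside $\F_q^*$, so I do not anticipate any genuine technical obstacle beyond bookkeeping.
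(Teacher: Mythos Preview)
The paper does not supply its own proof of this lemma: it is quoted verbatim from \cite{RefJ7} (Jin--Xing) and used as a black box, so there is no argument in the paper to compare against.

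Your proposal is correct and follows the standard route. One small remark: the uniqueness-up-to-scalar step in your second paragraph, while true, is more than the lemma requires. Since the statement is only an implication (the character hypothesis implies the existence of a self-dual GRS code), you can bypass the equivalence argument entirely: pick $\lambda\in\F_q^*$ with $\eta(\lambda)=\eta_0$, set $v_i$ to be any square root of $\lambda L_{\mathbf{a}}(\alpha_i)^{-1}$, and then directly compute $v_i'=(v_i L_{\mathbf{a}}(\alpha_i))^{-1}=\lambda^{-1}v_i$, whence $GRS_k(\mathbf{a},\mathbf{v}')=GRS_k(\mathbf{a},\mathbf{v})$. This sidesteps the polynomial-degree bookkeeping you flagged as the most delicate point. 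That said, your argument as written is sound, and the characterization you obtain (self-duality iff all $v_i^2 L_{\mathbf{a}}(\alpha_i)$ coincide) is a genuine strengthening that is itself well known in the literature.
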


In coding theory, the $k$-dimensional extended GRS codes of length $n$ associated to $\mathbf{a}$ and $\mathbf{v}$ is defined by
\begin{equation}\label{eq2}
GRS_{k}(\mathbf{a},\mathbf{v},\infty)=\{(v_{1}f(\alpha_{1}),v_{2}f(\alpha_{2}),\cdots,v_{n-1}f(\alpha_{n-1}),f_{k-1}):\ f(x)\in \F_{q}[x],\ \deg(f(x))\leq k-1\},
\end{equation}
where $f_{k-1}$ is the coefficient of $x^{k-1}$ in $f(x)$. It is easy to verify that $GRS_{k}(\mathbf{a},\mathbf{v},\infty)$ is an $[n,k,n-k+1]_q$ MDS code.
And its dual code is also MDS.

The following lemma provides a similar criterion for an extended GRS code being self-dual.
\begin{lemma}(\cite{RefJ10})\label{lem2}
Let notations be the same as before. For an odd integer $n$ and $k=\frac{n+1}{2}$, if $\eta(-L_{\mathbf{a}}(\alpha_i))=1$ for all $1\leq i \leq n$, then $\GRS_k(\mathbf{a},\mathbf{v},\infty)$ defined in
Eq. (\ref{eq2}) is an MDS self-dual code with $v_i^2=(-L_\mathbf{a}(\alpha_i))^{-1}$ for $1\leq i \leq n$.
\end{lemma}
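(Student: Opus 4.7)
My plan is to translate self-orthogonality of $\GRS_k(\mathbf{a},\mathbf{v},\infty)$ into a system of moment identities on the squares $v_i^2$, then solve these simultaneously using the classical Lagrange-interpolation identity; the dimension count will automatically upgrade self-orthogonality to self-duality.

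First I would pick the natural basis of $\GRS_k(\mathbf{a},\mathbf{v},\infty)$ consisting of the images of $1, x, \ldots, x^{k-1}$, yielding generator rows of the form $(v_1\alpha_1^j, \ldots, v_n\alpha_n^j, \delta_{j,k-1})$ for $0\le j\le k-1$. The Euclidean inner product of the rows indexed by $j$ and $l$ is $\sum_{i=1}^n v_i^2 \alpha_i^{j+l} + \delta_{j,k-1}\delta_{l,k-1}$. Since $k=(n+1)/2$, we have $2k-2=n-1$, and forcing every such inner product to vanish is equivalent to the moment system
\begin{equation*}
\sum_{i=1}^n v_i^2 \alpha_i^m = 0\ \text{for}\ 0\le m\le n-2, \qquad \sum_{i=1}^n v_i^2 \alpha_i^{n-1} = -1.
\end{equation*}

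Next I would exploit a standard Lagrange identity: comparing the $x^{n-1}$-coefficient on both sides of $x^m=\sum_{i=1}^n \alpha_i^m\,\prod_{j\ne i}(x-\alpha_j)/L_{\mathbf{a}}(\alpha_i)$ for each $0\le m\le n-1$ yields $\sum_{i=1}^n \alpha_i^m/L_{\mathbf{a}}(\alpha_i)=0$ for $m\le n-2$ and $=1$ for $m=n-1$. The assignment $v_i^2=-L_{\mathbf{a}}(\alpha_i)^{-1}=(-L_{\mathbf{a}}(\alpha_i))^{-1}$ therefore satisfies the entire moment system at once; such $v_i\in\F_q^*$ exist if and only if $(-L_{\mathbf{a}}(\alpha_i))^{-1}$ is a nonzero square, i.e., $\eta(-L_{\mathbf{a}}(\alpha_i))=1$, which is exactly the hypothesis. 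Since the resulting self-orthogonal code has dimension $k$ and length $2k$, it coincides with its dual, and MDS-ness is inherited from the standard extended GRS construction.

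The main obstacle is the careful bookkeeping around the extra ``$\infty$'' coordinate: it is precisely the term $\delta_{j,k-1}\delta_{l,k-1}$ that produces the inhomogeneous value $-1$ in the top moment equation, which forces the minus sign in $v_i^2=(-L_{\mathbf{a}}(\alpha_i))^{-1}$ and hence the strict condition $\eta(-L_{\mathbf{a}}(\alpha_i))=1$, rather than the weaker constancy condition on $\eta(L_{\mathbf{a}}(\alpha_i))$ that suffices in the non-extended setting of Lemma \ref{rem1}.
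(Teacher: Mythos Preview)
Your proof is correct. The paper itself does not give a proof of this lemma; it simply quotes the result from \cite{RefJ10}. Your argument---writing out the generator rows, reducing self-orthogonality to the moment system $\sum_i v_i^2\alpha_i^m=0$ for $0\le m\le n-2$ and $\sum_i v_i^2\alpha_i^{n-1}=-1$, and then solving it via the Lagrange interpolation identity $\sum_{i=1}^n \alpha_i^m/L_{\mathbf{a}}(\alpha_i)=\delta_{m,n-1}$---is the standard route and is essentially what appears in the original source. Your handling of the extra $\infty$ coordinate, which produces the inhomogeneous $-1$ in the top moment and hence forces the sign in $\eta(-L_{\mathbf{a}}(\alpha_i))=1$, is exactly the point of the lemma and is done correctly.
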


Apart from self-duality, self-orthogonality is also a problem worthy of study. In particular, for MDS self-orthogonal codes, two criterions about (extended) GRS codes were shown in \cite{RefJ18}.
We rephrase them in the following two lemmas.

\begin{lemma}(\cite{RefJ18})\label{lem self-orth 1}
Assume $1\leq k\leq \lfloor \frac{n}{2}\rfloor$. Then $GRS_k(\mathbf{a},\mathbf{v})$ is self-orthogonal if and only if
there exists a nonzero polynomial $\omega(x)=\omega_{n-2k}x^{n-2k}+\dots+\omega_1x+\omega_0\in \F_q[x]$ such that
$\eta(\omega(\alpha_i)L_{\mathbf{a}}(\alpha_i))=1$ for $1\leq i\leq n$.
\end{lemma}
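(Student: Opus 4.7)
The plan is to convert the self-orthogonality condition $\C\subseteq \C^{\bot}$ into a polynomial identity at the evaluation points. I use the well-known fact that the dual of $\GRS_k(\mathbf{a},\mathbf{v})$ is again a GRS code, namely $\GRS_{n-k}(\mathbf{a},\mathbf{v}')$ with $v'_i=(v_i L_{\mathbf{a}}(\alpha_i))^{-1}$. Hence self-orthogonality is equivalent to each basis vector $(v_i\alpha_i^j)_{i=1}^n$, $0\leq j\leq k-1$, lying in $\GRS_{n-k}(\mathbf{a},\mathbf{v}')$. Setting $u_i:=v_i^2 L_{\mathbf{a}}(\alpha_i)$, this translates to: for every $0\leq j\leq k-1$ there is a polynomial $h_j(x)$ of degree at most $n-k-1$ with $u_i\alpha_i^j=h_j(\alpha_i)$ for all $i$.

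For the forward direction I let $U(x)$ be the unique polynomial of degree at most $n-1$ that interpolates the $u_i$ at the $\alpha_i$. The identity at index $j$ then becomes $x^j U(x)\equiv h_j(x)\pmod{f_{\mathbf{a}}(x)}$. At $j=0$, since both sides have degree strictly less than $\deg f_{\mathbf{a}}=n$, we obtain $U=h_0$, so $\deg U\leq n-k-1$. Feeding this into $j=1$, the polynomial $xU(x)$ has degree at most $n-k<n$, so $xU(x)=h_1(x)$, forcing its top coefficient to vanish and giving $\deg U\leq n-k-2$. Iterating this degree-drop argument through $j=k-1$ yields $\deg U\leq n-2k$. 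Setting $\omega:=U$ produces the desired nonzero polynomial of degree at most $n-2k$ with $\omega(\alpha_i)=u_i$ for all $i$; the nonvanishing of $\omega$ at each $\alpha_i$ is automatic from $u_i\neq 0$.

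Next I translate the identity $v_i^2 L_{\mathbf{a}}(\alpha_i)=\omega(\alpha_i)$ into the quadratic-character condition. Solvability for $v_i\in \F_q^*$ is equivalent to $\omega(\alpha_i)/L_{\mathbf{a}}(\alpha_i)$ being a nonzero square; since $\eta$ takes values in $\{\pm 1\}$ on $\F_q^*$, this is equivalent to $\eta(\omega(\alpha_i)L_{\mathbf{a}}(\alpha_i))=1$, which is the stated condition.

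For the converse, given such $\omega$, I choose any $v_i\in\F_q^*$ with $v_i^2=\omega(\alpha_i)/L_{\mathbf{a}}(\alpha_i)$. For each $0\leq j\leq k-1$ the polynomial $g_j(x):=x^j\omega(x)$ has degree at most $(n-2k)+(k-1)=n-k-1$, and a direct computation shows $v_i\alpha_i^j=v'_i g_j(\alpha_i)$, so every basis vector of $\GRS_k(\mathbf{a},\mathbf{v})$ lies in $\GRS_{n-k}(\mathbf{a},\mathbf{v}')=\GRS_k(\mathbf{a},\mathbf{v})^{\bot}$. The inductive degree-drop step is the main calculation; everything else is bookkeeping.
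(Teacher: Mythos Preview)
The paper does not prove this lemma; it is quoted verbatim from \cite{RefJ18} and only invoked later (e.g.\ in Theorem~\ref{th 4}). So there is no in-paper argument to compare against, and the relevant question is simply whether your proof is correct.

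It is. Your use of the standard description of the dual, $\GRS_k(\mathbf{a},\mathbf{v})^{\bot}=\GRS_{n-k}(\mathbf{a},\mathbf{v}')$ with $v'_i=(v_iL_{\mathbf{a}}(\alpha_i))^{-1}$, reduces self-orthogonality to the existence of interpolants $h_j$ of degree $\le n-k-1$ for the vectors $(u_i\alpha_i^j)_i$, $u_i=v_i^2L_{\mathbf{a}}(\alpha_i)$. The degree-drop induction is sound: once $\deg U\le n-k-m$, the polynomial $x^mU(x)$ has degree $\le n-k<n$, so agreement with $h_m$ at all $n$ points forces $x^mU=h_m$ and hence $\deg U\le n-k-1-m$; running $m$ up to $k-1$ gives $\deg U\le n-2k$, and $U$ is nonzero because each $u_i\ne 0$. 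The passage from $v_i^2=\omega(\alpha_i)/L_{\mathbf{a}}(\alpha_i)$ to $\eta(\omega(\alpha_i)L_{\mathbf{a}}(\alpha_i))=1$ is just multiplication by the square $L_{\mathbf{a}}(\alpha_i)^2$. The converse is equally clean: with $g_j(x)=x^j\omega(x)$ of degree $\le n-k-1$ you exhibit each generator of $\C$ as a codeword of $\C^{\bot}$.

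One remark on the statement itself: as written, the lemma fixes $\mathbf{v}$, yet the ``if'' direction only guarantees that \emph{some} $\mathbf{v}$ works. This is the intended reading both in \cite{RefJ18} and here (the paper's applications construct $\omega$ and then assert existence of a self-orthogonal GRS code), and your proof handles it correctly by choosing $v_i$ with $v_i^2=\omega(\alpha_i)/L_{\mathbf{a}}(\alpha_i)$. It would be worth flagging this quantifier issue explicitly in your write-up.
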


\begin{lemma}(\cite{RefJ18})\label{lem self-orth 2}
Assume $1\leq k\leq \lfloor \frac{n+1}{2}\rfloor$. Then $GRS_k(\mathbf{a},\mathbf{v},\infty)$ is self-orthogonal if and only if
there exists a nonzero polynomial $\omega(x)=-x^{n-2k+1}+\omega_{n-2k}x^{n-2k}+\dots+\omega_0\in \F_q[x]$ such that
 $\eta(\omega(\alpha_i)L_{\mathbf{a}}(\alpha_i))=1$ for $1\leq i\leq n$.
\end{lemma}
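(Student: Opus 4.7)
The plan is to turn self-orthogonality into a polynomial identity by unfolding the inner product on codewords and then repackaging the resulting power-sum conditions as a constraint on a single interpolating polynomial $\omega$. First I would expand the inner product of two generic codewords: since the code is spanned by words $(v_i f(\alpha_i),\,f_{k-1})$ coming from polynomials $f\in\F_q[x]$ of degree at most $k-1$, self-orthogonality is equivalent to
\[
\sum_{i}v_i^{2}f(\alpha_i)g(\alpha_i)+f_{k-1}g_{k-1}=0
\]
for all such $f,g$. Because $\mathrm{char}(\F_q)$ is odd and products $fg$ span all polynomials of degree at most $2k-2$, this collapses to the single quantifier $\sum_i v_i^{2}h(\alpha_i)+h_{2k-2}=0$ for every $h$ with $\deg h\le 2k-2$. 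Testing on the monomials $h=1,x,\dots,x^{2k-2}$ converts the problem into the power-sum constraints $\sigma_j:=\sum_i v_i^{2}\alpha_i^{j}=0$ for $0\le j\le 2k-3$, together with $\sigma_{2k-2}=-1$.

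Next I would introduce $\omega$ as the low-degree polynomial interpolant of the values $v_i^{2}L_{\mathbf{a}}(\alpha_i)$, so that $v_i^{2}=\omega(\alpha_i)/L_{\mathbf{a}}(\alpha_i)$. The Lagrange identity
\[
\sum_{i}\frac{p(\alpha_i)}{L_{\mathbf{a}}(\alpha_i)}=\bigl[\text{top coefficient of }p(x)\bmod\textstyle\prod_{i}(x-\alpha_i)\bigr]
\]
then lets me rewrite each $\sigma_j$ as the top coefficient of $\omega(x)x^{j}$ after reduction modulo the vanishing polynomial $\prod_i(x-\alpha_i)$. The equations $\sigma_0=\cdots=\sigma_{2k-3}=0$ force the high-degree coefficients of $\omega$ above a certain threshold to vanish, while $\sigma_{2k-2}=-1$ pins the next coefficient down to $-1$. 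A careful degree count reproduces exactly the shape $\omega(x)=-x^{n-2k+1}+\omega_{n-2k}x^{n-2k}+\cdots+\omega_{0}$, whose $n-2k+1$ free parameters match the dimension of the affine solution space for $(v_i^{2})$.

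Finally, since each $v_i\in\F_q^{*}$ must be a square root of $\omega(\alpha_i)/L_{\mathbf{a}}(\alpha_i)$, the system is solvable in $\F_q^{*}$ if and only if $\omega(\alpha_i)/L_{\mathbf{a}}(\alpha_i)$ is a nonzero square for every $i$, which is precisely $\eta(\omega(\alpha_i)L_{\mathbf{a}}(\alpha_i))=1$. The converse direction is a direct reversal: given such an $\omega$, set $v_i^{2}=\omega(\alpha_i)/L_{\mathbf{a}}(\alpha_i)$ and verify the $\sigma_j$ identities by running the Lagrange computation backwards. The main obstacle I anticipate is the middle step---tracking which coefficients of $\omega$ contribute to the top coefficient of $\omega(x)x^{j}\bmod\prod_i(x-\alpha_i)$ once $\deg(\omega)+j$ exceeds the degree of the vanishing polynomial, and confirming that the correction produced by subtracting off the leading term lands exactly on the coefficient that $\sigma_{2k-2}=-1$ is supposed to fix. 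This bookkeeping is analogous to, but slightly more delicate than, the unextended case of Lemma \ref{lem self-orth 1}, because of the extra $h_{2k-2}$ contribution coming from the infinity coordinate.
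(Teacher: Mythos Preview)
The paper does not supply its own proof of this lemma; it is quoted from \cite{RefJ18} and used as a black box, so there is nothing in the paper to compare your argument against. Your approach is the standard one and is correct. The obstacle you anticipate in the middle step does not in fact arise: once you process the conditions $\sigma_0=0,\sigma_1=0,\dots$ in order, each equation $\sigma_j=0$ kills the coefficient $\omega_{n-1-j}$ before you pass to $\sigma_{j+1}$, so at every stage $\deg\bigl(\omega(x)x^{j+1}\bigr)\le n-1$ and no reduction modulo $f_{\mathbf a}(x)$ is ever required; the final condition $\sigma_{2k-2}=-1$ then reads off the coefficient of $x^{n-2k+1}$ directly.
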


In addition, for our constructions, we still need some other important properties. We list them in the followings.
\begin{lemma}(\cite{RefJ13})\label{lem5}
$\rm (1)$ Let $S_1$ and $S_2$ be disjoint subsets of $\F_q$, $S=S_1\cup S_2$.
Then for $b\in S$,
\[ L_{S}(b)=\begin{cases}
L_{S_1}(b)f_{S_2}(b),\quad if\  b\in S_1 \\
L_{S_2}(b)f_{S_1}(b),\quad  if\  b\in S_2.
\end{cases}\]\\
$\rm (2)$ Let $\theta$ be a primitive element of $\F_q$ and $q-1=ef$.
Denote $H=\langle\theta^{e}\rangle$, then for $a\in \theta^i H$, we have
\begin{equation}\label{eq111}
f_{\theta^i H}(x)=x^{f}-\theta^{if} \quad {\rm{and}} \quad L_{\theta^i H}(a)=f'_{\theta^i H}(a)=fa^{f-1}.
\end{equation}
\end{lemma}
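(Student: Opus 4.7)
The plan is to treat the two parts separately and in each case reduce the claim to a short polynomial computation. For Part~(1), the key observation is that $L_S(b)$ is by definition the value of the formal derivative $f_S'(x)$ at $b\in S$, so the product rule applied to $f_S = f_{S_1}f_{S_2}$ will do all the work. For Part~(2), the key observation is that $H=\langle\theta^e\rangle$ has order $f = (q-1)/e$, so its elements are exactly the $f$ distinct roots of $x^f-1$ in $\F_q$, which pins down $f_H(x)$ uniquely; each coset $\theta^iH$ is then identified with the root set of $x^f-\theta^{if}$ by the same degree-count.

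For Part~(1), I would first note that the disjointness $S_1\cap S_2=\emptyset$ together with $S=S_1\cup S_2$ gives the factorization $f_S(x) = f_{S_1}(x)\,f_{S_2}(x)$. Differentiating yields
\begin{equation*}
f_S'(x) = f_{S_1}'(x)\,f_{S_2}(x) + f_{S_1}(x)\,f_{S_2}'(x).
\end{equation*}
For $b\in S_1$ the second summand vanishes because $b$ is a root of $f_{S_1}$; evaluating at $b$ and using $L_T(b) = f_T'(b)$ for $b\in T$ gives $L_S(b) = L_{S_1}(b)\,f_{S_2}(b)$, and the case $b\in S_2$ is symmetric.

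For Part~(2), I would first observe that every element of the cyclic group $H=\langle\theta^e\rangle$ satisfies $x^f=1$; since a polynomial of degree $f$ over the field $\F_q$ has at most $f$ roots and $|H|=f$, this forces $f_H(x) = x^f-1$. If $\alpha = \theta^i h$ with $h\in H$, then $\alpha^f = \theta^{if}h^f = \theta^{if}$, so every element of the coset $\theta^iH$ is a root of $x^f-\theta^{if}$; the coset also has exactly $f$ elements, so by the same degree-counting argument $f_{\theta^iH}(x) = x^f-\theta^{if}$. Differentiating gives $f'_{\theta^iH}(x) = fx^{f-1}$, whence $L_{\theta^iH}(a) = fa^{f-1}$ for $a\in\theta^iH$.

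I do not foresee any real obstacle here: the statement is essentially a packaging of the product rule together with the description of cyclic subgroups of $\F_q^*$ as root sets of binomials $x^f-c$, and the only mildly substantive ingredient is the degree-counting used to identify $f_H$ and $f_{\theta^iH}$ with $x^f-1$ and $x^f-\theta^{if}$ respectively. The lemma is clearly intended as a calculational toolkit to be reused when evaluating $\eta(L_{\mathbf{a}}(\alpha_i))$ for evaluation sets built out of unions of cosets of multiplicative subgroups, which is the setting of all the Theorems~\ref{th 2} and~\ref{th 3} constructions.
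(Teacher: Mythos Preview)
Your proof is correct. Note that the paper does not actually prove this lemma: it is quoted from \cite{RefJ13} and stated without proof, so there is no in-paper argument to compare against. Your approach---factoring $f_S=f_{S_1}f_{S_2}$ and applying the product rule for Part~(1), and identifying the coset $\theta^iH$ with the root set of the binomial $x^f-\theta^{if}$ by degree-counting for Part~(2)---is the standard one and is exactly how the result is established in the cited source.
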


\begin{lemma}(\cite{RefJ13})\label{lem xin}
Let $\theta$ be a primitive element of $\F_q$ and $q-1=ef$.
Denote $H=\langle\theta^{e}\rangle$.
Let $S_i=\xi_iH$ $(1\leq i\leq t)$ be $t$ distinct cosets of $H$ in $\F_q^*$ $(0\leq t\leq e-1)$,
$S=\cup_{i=1}^t S_i$, $|S|=tf$.
Then
\begin{equation}
f_S(x)=\prod_{i=1}^{t}\prod_{j=0}^{f}(x-\xi_i\theta^{ej})=\prod_{i=1}^{t}(x^{f}-\xi_i^f)=g(x^f),
\end{equation}
where $g(x)=\prod_{i=1}^t(x-\xi_i^f)=f_{S'}(x)$, $S'=\{\xi_i^f:\ 1\leq i \leq t\}$.
\end{lemma}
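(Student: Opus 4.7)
The plan is to decompose $f_S$ along the coset partition $S = \bigcup_{i=1}^{t}\xi_i H$, evaluate each coset factor via Lemma~\ref{lem5}(2), and then recognize the resulting polynomial as one in $x^f$.

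First, since the cosets $\xi_i H$ are pairwise disjoint with union $S$, I will split the defining product as
\[
f_S(x) = \prod_{a\in S}(x-a) = \prod_{i=1}^{t}\prod_{a\in \xi_i H}(x-a) = \prod_{i=1}^{t} f_{\xi_i H}(x).
\]
Writing $\xi_i = \theta^{m_i}$ for a suitable integer $m_i$, the set $\xi_i H = \theta^{m_i}H$ is precisely the list $\{\xi_i\theta^{ej}: 0 \le j \le f-1\}$, so Lemma~\ref{lem5}(2) gives $f_{\xi_i H}(x) = x^{f} - \theta^{m_i f} = x^{f} - \xi_i^{f}$. Multiplying over $i$ yields the middle equality of the displayed identity.

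For the last equality, I set $g(y) = \prod_{i=1}^{t}(y - \xi_i^{f})$, so that $f_S(x) = g(x^{f})$ by construction. It then remains to identify $g$ with $f_{S'}$ for $S' = \{\xi_i^{f} : 1 \le i \le t\}$, which requires checking $|S'| = t$, i.e.\ that the values $\xi_i^{f}$ are pairwise distinct. Since $f \mid q-1$, the $f$-th power map on $\F_q^*$ has kernel equal to the unique subgroup of order $f$, which is $H$ itself; hence $\xi_i^{f} = \xi_j^{f}$ is equivalent to $\xi_i H = \xi_j H$, and the assumed distinctness of the cosets forces $i = j$.

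The argument is essentially a bookkeeping exercise built on Lemma~\ref{lem5}, and I expect the only substantive point to be the distinctness of the $f$-th powers $\xi_i^{f}$, which is the single algebraic fact ensuring that $g$ has simple roots and that $f_S$ factors as claimed through $x^{f}$.
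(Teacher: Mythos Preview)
Your proof is correct. Note, however, that the paper does not supply its own proof of this lemma: it is quoted from \cite{RefJ13} and stated without argument, so there is nothing in the paper to compare against beyond observing that your decomposition via Lemma~\ref{lem5}(2) is exactly the intended route.
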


\begin{lemma}(\cite{RefJ20})\label{lem6}
Assume that $n\leq q-1$ is even.
Let $\mathbf{a}=(\alpha_1,\alpha_2,\cdots,\alpha_n)\in \F_q^n$ such that
\begin{equation}
\eta(-\alpha_iL_{\mathbf{a}}(\alpha_i))=\eta(-\prod_{i=1}^{n}\alpha_i)=1,\ 1\leq i\leq n.
\end{equation}
Then there exists a $q$-ary MDS self-dual code of length $n+2$.
\end{lemma}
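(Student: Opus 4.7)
My plan is to build the desired length-$(n+2)$ MDS self-dual code by adjoining the single new evaluation point $0$ to $\mathbf{a}$ and then invoking the extended GRS criterion of Lemma \ref{lem2}. Before doing so I observe that the hypothesis $\eta(-\alpha_i L_\mathbf{a}(\alpha_i))=1$ forces $\alpha_i\neq 0$ for every $i$, so $0\notin\{\alpha_1,\ldots,\alpha_n\}$, and I may define the $(n+1)$-tuple $\mathbf{a}'=(\alpha_1,\alpha_2,\ldots,\alpha_n,0)$ of distinct elements of $\F_q$. The target is then to show that the extended GRS code $\GRS_{(n+2)/2}(\mathbf{a}',\mathbf{v}',\infty)$ is MDS self-dual for a suitable choice of $\mathbf{v}'$.

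Next I would compute $L_{\mathbf{a}'}$ at every point of $\mathbf{a}'$ using Lemma \ref{lem5}(1) with the partition $\{\alpha_1,\ldots,\alpha_n\}\cup\{0\}$. At an old point $\alpha_i$ the only new factor introduced is $\alpha_i-0=\alpha_i$, giving $L_{\mathbf{a}'}(\alpha_i)=\alpha_i\,L_\mathbf{a}(\alpha_i)$. At the new point $0$ one gets $L_{\mathbf{a}'}(0)=f_\mathbf{a}(0)=(-1)^n\prod_{i=1}^n\alpha_i$, which simplifies to $\prod_{i=1}^n\alpha_i$ because $n$ is even.

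With these identities in hand, the Lemma \ref{lem2} hypothesis $\eta(-L_{\mathbf{a}'}(\alpha))=1$ for every $\alpha\in\mathbf{a}'$ translates exactly into the two assumptions of the statement: $\eta(-\alpha_i L_\mathbf{a}(\alpha_i))=1$ at each finite $\alpha_i$, and $\eta(-\prod_{i=1}^n\alpha_i)=1$ at the new point $0$. Because $\mathbf{a}'$ has $n+1$ entries with $n+1$ odd, and $(n+2)/2$ is the correct target dimension, Lemma \ref{lem2} applies to $\mathbf{a}'$ and produces the desired MDS self-dual code of length $n+2$, with $\mathbf{v}'$ prescribed by $(v'_i)^2=(-L_{\mathbf{a}'}(\alpha'_i))^{-1}$. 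I do not foresee any substantive obstacle: the construction is essentially forced once one recognizes that adjoining $0$ to the evaluation set is precisely what converts the stray factor of $\alpha_i$ in the first hypothesis into the corresponding new factor picked up by $L_{\mathbf{a}'}$, and the only care needed is the parity bookkeeping that uses $n$ even to absorb the sign $(-1)^n$ at the new point and $n+1$ odd to trigger Lemma \ref{lem2}.
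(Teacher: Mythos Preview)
Your argument is correct. The paper itself does not supply a proof of this lemma; it simply cites it from \cite{RefJ20}. Your approach---adjoin the point $0$ to $\mathbf{a}$ to obtain $\mathbf{a}'$ of odd length $n+1$, compute $L_{\mathbf{a}'}(\alpha_i)=\alpha_i L_{\mathbf{a}}(\alpha_i)$ and $L_{\mathbf{a}'}(0)=(-1)^n\prod_i\alpha_i=\prod_i\alpha_i$, and then invoke Lemma~\ref{lem2}---is exactly the intended one and matches how the paper later applies Lemma~\ref{lem6} (see the proofs of Theorems~\ref{th 2}(3) and~\ref{th 3}(3), where the three quantities $\eta(-\alpha L_S(\alpha))$ and $\eta(-\prod\alpha)$ are verified precisely so that this adjoin-$0$-then-extend mechanism can be triggered).
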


\section{Main results}\label{sec3}
In this section, we construct several new classes of MDS self-dual codes over $\F_q$.
Let $\theta$ be a primitive element of $\F_q$. Now we consider the two decompositions of $q-1$. For brevity, we emphasize the following notations:
\begin{itemize}
    \item $q-1=e_1f_1=e_2f_2$.
    \item $\alpha=\theta^{e_1}$, $\beta=\theta^{e_2}$, $A=\langle\alpha\rangle$, $B=\langle\beta\rangle$.
    \item $D_1=\frac{e_1}{\gcd(e_1,e_2)}=\frac{f_2}{\gcd(f_1,f_2)}$, $D_2=\frac{e_2}{\gcd(e_1,e_2)}=\frac{f_1}{\gcd(f_1,f_2)}$.
\end{itemize}

Based on notations above, we gave a necessary and sufficient condition for cosets to be different from each other in \cite{RefJ21}. We recall it as Lemma
\ref{mod butong} here, which is extremely crucial for our constructions.

\begin{lemma}(\cite{RefJ21})\label{mod butong}
Let $\{\beta^{i_1},\beta^{i_2},\cdots,\beta^{i_t}\}$ be a subset of $B$, where $i_1,i_2,\dots,i_t$ are distinct module $f_2$.
Then $\beta^{i_\lambda}A\ (1\leq \lambda \leq t)$ are distinct cosets of $A$ in $\F_q^*$ if and only if $i_1, i_2,\dots, i_t$ are
distinct module $D_1$.
\end{lemma}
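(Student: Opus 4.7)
The plan is to reduce the coset-equality question $\beta^{i_\lambda}A = \beta^{i_\mu}A$ to a divisibility condition on $i_\lambda - i_\mu$, and then translate that divisibility into a congruence modulo $D_1$.

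First I would observe that for any integers $i,j$, the cosets $\beta^{i}A$ and $\beta^{j}A$ of $A$ in $\F_q^*$ coincide if and only if $\beta^{i-j}\in A$. Substituting $\beta=\theta^{e_2}$ and $A=\langle\theta^{e_1}\rangle$, this becomes $\theta^{e_2(i-j)}\in\langle\theta^{e_1}\rangle$. Since $\theta$ has order $q-1$ and $e_1\mid q-1$, membership in $\langle\theta^{e_1}\rangle$ is governed by $\theta^{m}\in\langle\theta^{e_1}\rangle \Longleftrightarrow e_1\mid m$. Applied to $m=e_2(i-j)$, this gives the equivalent condition $e_1\mid e_2(i-j)$.

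Next, dividing through by $\gcd(e_1,e_2)$ turns $e_1\mid e_2(i-j)$ into $D_1\mid D_2(i-j)$. By the very definition of $D_1,D_2$ as $e_1,e_2$ divided by their gcd, one has $\gcd(D_1,D_2)=1$, and hence the condition is equivalent to $D_1\mid (i-j)$. Therefore $\beta^{i}A=\beta^{j}A$ if and only if $i\equiv j\pmod{D_1}$, and applying this to all pairs $(i_\lambda,i_\mu)$ yields the claimed iff.

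I do not anticipate a real obstacle here; the proof is elementary. The one technical point to double-check is consistency of the hypothesis with the conclusion: the $i_\lambda$ are assumed distinct mod $f_2$, so one should verify that this is at least a necessary condition for distinctness mod $D_1$, i.e.\ that $D_1\mid f_2$. This follows from the identity $D_1=e_1/\gcd(e_1,e_2)=f_2/\gcd(f_1,f_2)$, which is immediate from $e_1 f_1=e_2 f_2=q-1$ after reducing the equal ratios $e_1/e_2=f_2/f_1$ to lowest terms. With that sanity check, the equivalence above delivers the lemma.
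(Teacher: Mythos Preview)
Your argument is correct: reducing $\beta^{i}A=\beta^{j}A$ to $e_1\mid e_2(i-j)$ and then to $D_1\mid (i-j)$ via $\gcd(D_1,D_2)=1$ is exactly the natural proof, and the sanity check $D_1\mid f_2$ is handled cleanly. The paper itself does not supply a proof of this lemma---it is quoted from \cite{RefJ21}---so there is no in-paper argument to compare against; your proof would be the standard one.
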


Now, we present our first construction as follows, which can be used to obtain three classes of MDS self-dual codes.
\begin{theorem}\label{th 2}
Let $q=r^2$, where $r$ is an odd prime power.
Let $e_1\equiv 2^l({\rm mod}~2^{l+1})$ and $2^l\mid e_2$, where $l\geq 2$.
Suppose $2e_2\mid e_1(r-1)$ and $e_1\mid e_2(r+1)$.
Let $1\leq s \leq D_1$, $1\leq t \leq D_2$, $4\mid (s-1)(r+1)$ and $n_1=sf_1+tf_2$.
The following statements hold.
\begin{itemize}
\item[(1)] If $n_1$ is even, there exists a $q$-ary MDS self-dual code of length $n_1$.
\item[(2)] If $n_1$ is odd, there exists a $q$-ary MDS self-dual code of length $n_1+1$.
\item[(3)] If $n_1$ is even, there exists a $q$-ary MDS self-dual code of length $n_1+2$.
\end{itemize}
\end{theorem}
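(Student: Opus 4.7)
The plan is constructive: I would exhibit an evaluation set $S\subset\F_q^*$ of size $n_1$ and then verify that Lemmas \ref{rem1}, \ref{lem2}, and \ref{lem6} apply to yield parts (1), (2), and (3), respectively. Take $S=S_1\cup S_2$, where $S_1=\bigcup_{i\in I}\beta^i A$ is a union of $s$ cosets of $A$ and $S_2=\bigcup_{j\in J}\alpha^j B$ is a union of $t$ cosets of $B$. Since $s\le D_1$ and $t\le D_2$, choosing $I$ and $J$ with representatives pairwise distinct modulo $D_1$ and $D_2$ respectively makes the cosets in each family distinct by Lemma \ref{mod butong}, so $|S_1|=sf_1$ and $|S_2|=tf_2$. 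The divisibility hypotheses $2e_2\mid e_1(r-1)$ and $e_1\mid e_2(r+1)$ should then be used to pick $I,J$ with $S_1\cap S_2=\emptyset$, yielding $|S|=n_1$.

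Once $S$ is fixed, Lemma \ref{lem5}(1) gives $L_S(b)=L_{S_1}(b)f_{S_2}(b)$ for $b\in S_1$ and the symmetric formula for $b\in S_2$. For $b\in\beta^i A$ one has $b^{f_1}=\beta^{if_1}$, and Lemmas \ref{lem5}(2) and \ref{lem xin} give $L_{S_1}(b)=f_1 b^{f_1-1}\prod_{i'\in I,\, i'\neq i}(\beta^{if_1}-\beta^{i'f_1})$ and $f_{S_2}(b)=\prod_{j\in J}(b^{f_2}-\alpha^{jf_2})$; the analogous identities hold on $S_2$. Writing $b=\theta^k$ so that $\eta(b)=(-1)^k$, each character $\eta(L_S(b))$ reduces to the parity of an explicit integer linear combination of exponents. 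For part (1) I must check that this parity is the same for every $b\in S$; part (2) requires $\eta(-L_S(b))=1$ on $S$ and then Lemma \ref{lem2}; part (3) requires $\eta(-bL_S(b))=\eta(-\prod_{b\in S}b)$, where the product is $\theta$ raised to an easily computed sum of coset-exponents, and then Lemma \ref{lem6}.

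The main obstacle is the 2-adic bookkeeping. The condition $2e_2\mid e_1(r-1)$ controls the quadratic residue behaviour of powers of $\beta$ and of the factors $b^{f_2}-\alpha^{jf_2}$, while $e_1\mid e_2(r+1)$ plays the symmetric role for $\alpha$ and for $f_{S_1}$. The hypothesis $e_1\equiv 2^l\pmod{2^{l+1}}$ with $2^l\mid e_2$ and $l\ge 2$ is precisely what synchronises these two counts so that $\eta(L_S(b))$ takes the same value on $S_1$ and on $S_2$; and $4\mid(s-1)(r+1)$ absorbs the parity of the $(s-1)$-fold product $\prod_{i'\neq i}(\beta^{if_1}-\beta^{i'f_1})$ appearing in $L_{S_1}$. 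Once part (1) is verified, parts (2) and (3) follow by the same arithmetic together with the auxiliary computation of $\prod_{b\in S}b$. No new conceptual idea is needed beyond these character manipulations, but the parity accounting is delicate and is the genuine content of the proof.
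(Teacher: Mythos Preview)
Your overall strategy is right, but the specific evaluation set you propose cannot work, and this is the heart of the proof. You take $S_2=\bigcup_{j\in J}\alpha^{j}B$. Since $\alpha\in A$ and $\beta\in B$, every coset $\beta^{i}A$ and every coset $\alpha^{j}B$ lies inside the subgroup $AB=\langle\theta^{\gcd(e_1,e_2)}\rangle$, and in any abelian group $G=HK$ each coset of $H$ meets each coset of $K$ (because $gH\cap g'K\neq\emptyset$ iff $g^{-1}g'\in HK=G$). Hence $\beta^{i}A\cap\alpha^{j}B\neq\emptyset$ for \emph{all} choices of $i,j$, so no selection of $I,J$ will make $S_1$ and $S_2$ disjoint. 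The hypotheses $2e_2\mid e_1(r-1)$ and $e_1\mid e_2(r+1)$ are not about disjointness at all; they are used later to push the relevant elements into $\F_r^*$ or to control their Frobenius images in the character computation.

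The paper's fix is to shift the $B$-cosets by a half-power: set $\gamma=\theta^{e_1/2}$ and take $N=\bigcup_{\nu}\gamma^{2j_\nu+1}B$. The point of the hypothesis $e_1\equiv 2^{l}\pmod{2^{l+1}}$, $2^{l}\mid e_2$ with $l\ge 2$ is exactly that $2^{l}\mid\gcd(e_1,e_2)$ while $2^{l}\nmid e_1/2$, so $\theta^{e_1/2}\notin AB$ and every coset $\gamma^{2j+1}B$ is disjoint from $AB\supset M$. You attributed this $2^{l}$ condition to ``synchronising'' the two character counts, but its primary role is to guarantee $M\cap N=\emptyset$. After this change the rest of your outline is essentially the paper's argument: one computes $L_S$ on each piece via Lemmas~\ref{lem5} and~\ref{lem xin}, uses $2e_2\mid e_1(r-1)$ to get $\theta^{e_1f_2/2}\in\F_r^*$ and $e_1\mid e_2(r+1)$ to get $(\beta^{f_1})^r=\beta^{-f_1}$, and the condition $4\mid(s-1)(r+1)$ indeed absorbs the sign from the $(s-1)$-fold product; then Lemmas~\ref{rem1}, \ref{lem2}, \ref{lem6} give (1), (2), (3).
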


\begin{proof}
(1) Let $\theta$ be a primitive element of $\F_q$ and $\gamma=\theta^{\frac{e_1}{2}}$.
Denote
\begin{equation}\label{eqqq1}
	   M=\bigcup_{\mu=1}^{s}\beta^{i_\mu}A,\  N=\bigcup_{\nu=1}^{t}\gamma^{2j_\nu+1}B\  {\rm{and}}\ S=M \cup N,
\end{equation}
where $i_1, i_2,\dots, i_s$ are distinct module $D_1$ and $j_1, j_2,\dots, j_t$ are distinct module $D_2$.
By Lemma \ref{mod butong},
$\beta^{i_\mu}A\ (1\leq \mu \leq s)$, $\gamma^{2j_\nu+1}B\ (1\leq \nu\leq t)$ are distinct coests of $A$, $B$, respectively.

Now we prove that $M\cap N=\emptyset$ by the method of disproof.
Let $e\in M\cap N$,
then there exists some integers $0\leq \mu_1\leq s-1$, $1\leq \nu_1\leq f_1$, $0\leq \mu_2\leq t-1$ and $1\leq \nu_2\leq f_2$
such that
\begin{equation}\label{equ jiao}
e=\beta^{\mu_1}\alpha^{\nu_1}=\gamma^{2\mu_2+1}\beta^{\nu_2}.
\end{equation}
Note that Eq. (\ref{equ jiao}) is equivalent to
$$\theta^{e_2(\mu_1-\nu_2)+e_1(\nu_1-\mu_2)-\frac{e_1}{2}}=1.$$
It follows that $e_2\mid e_1(\nu_1-\mu_2-\frac{1}{2})$, which implies $e_2\mid \frac{e_1}{2}(2\nu_1-2\mu_2-1)$, i.e.,
$\frac{e_2}{\gcd(e_2,\frac{e_1}{2})}\mid (2\nu_1-2\mu_2-1)$.
Since $2^l\nmid \frac{e_1}{2}$ and $2^l\mid e_2$, we can conclude that $2^l\nmid \gcd(e_2,\frac{e_1}{2})$ and $2\mid \frac{e_2}{\gcd(e_2,\frac{e_1}{2})}$,
which implies $2\mid (2\nu_1-2\mu_2-1)$. Clearly, this is a contradiction.
Hence, $|S| =n_1=sf_1+tf_2$. Denote $f_S(x)=f_M(x)f_N(x)$. By Lemma \ref{lem5}, it is easy to see that
 \[\begin{split}
	 L_S(x)=f'_S(x)=f'_M(x)f_N(x)+f_M(x)f'_N(x).
\end{split}\]

On one hand, let $\beta^{i_\lambda}\alpha^j\in M$ for some $1\leq \lambda \leq s$ and $1\leq j\leq f_1$, then
\[\begin{split}
	 L_S(\beta^{i_\lambda}\alpha^j)&=L_M(\beta^{i_\lambda}\alpha^j)f_N(\beta^{i_\lambda}\alpha^j)\\
                &=f_1\theta^{e_2i_\lambda(f_1-1)}\theta^{-je_1}L_\mathbf{a}(\beta^{i_\lambda f_1})\prod_{h=1}^{t}(\alpha^{f_2j}-\gamma^{(2j_h+1)f_2})\\
                &=f_1\theta^{e_2i_\lambda(f_1-1)}\theta^{-je_1}L_\mathbf{a}(\beta^{i_\lambda f_1})\prod_{h=1}^{t}(\theta^{e_1f_2j}-\theta^{\frac{e_1}{2}(2j_h+1)f_2}),\\
	\end{split}\]
where $\mathbf{a}=(\beta^{i_1f_1},\beta^{i_2f_1},\cdots,\beta^{i_{s}f_1})$.
We can also further derive that $\theta^{\frac{e_1f_2}{2}}\in \F_r^*$ and $$\eta(\prod_{h=1}^{t}(\theta^{e_1f_2j}-\theta^{\frac{e_1}{2}(2j_h+1)f_2}))=1$$
from the fact $\frac{\frac{e_1f_2}{2}}{1+r}=\frac{e_1(r-1)}{2e_2}$.
Note that $\eta(f_1\theta^{e_2i_\lambda(f_1-1)}\theta^{-je_1})=1$, thus, it is sufficient to consider $L_\mathbf{a}(\beta^{i_\lambda f_1})$.
Since $$(\beta^{f_1})^r=\theta^{\frac{e_2(r+1)(r^2-r)}{e_1}}=\beta^{-f_1},$$
it follows that
\[\begin{split}
	        (L_{\mathbf{a}}(\beta^{i_\lambda f_1}))^r &= \prod_{\mu=1 ,\mu\neq \lambda}^{s}(\beta^{-i_\lambda f_1}-\beta^{-i_\mu f_1}) \\
                                              &= \prod_{\mu=1 ,\mu\neq \lambda}^{s}\frac{\beta^{i_\mu f_1}-\beta^{i_\lambda f_1}}{\beta^{(i_\lambda+i_\mu)f_1}}\\
                                              &= (-1)^{s-1}\beta^{-i_\lambda f_1(s-2)-f_1 I} L_{\mathbf{a}}(\beta^{i_\lambda f_1}),\\
\end{split}\]
where $I=\sum_{\mu=1}^s i_\mu$. Moreover, we can know that
\[\begin{split}
(L_\mathbf{a}(\beta^{i_\lambda f_1}))^{r-1}=\theta^{\frac{(s-1)(r^2-1)}{2}-e_2f_1[i_\lambda (s-2)+I]}\\
	\end{split}\]
and
\[\begin{split}
L_\mathbf{a}(\beta^{i_\lambda f_1})=\theta^{\frac{(s-1)(r+1)}{2}-\frac{e_2(r+1)}{e_1}[i_\lambda (s-2)+I]+(r+1)c},\\
	\end{split}\]
where $c$ is some integer.
Considering $2^{l+1}\mid e_2(r+1)$ and $2^{l+1}\nmid e_1$, it follows that $\frac{e_2(r+1)}{e_1}$ is even. Therefore,
\begin{equation}\label{eqqq2}
\eta(L_S(\beta^{i_\lambda}\alpha^j))=\eta(L_\mathbf{a}(\beta^{i_\lambda f_1}))=(-1)^{\frac{(s-1)(r+1)}{2}}.
\end{equation}

On the other hand, let $\gamma^{2j_\delta+1}\beta^k\in N$ for some $1\leq \delta \leq t$ and $1\leq k\leq f_2$.
By Eq. (\ref{eq111}), we have
 \[\begin{split} L_S(\gamma^{2j_\delta+1}\beta^k)&=L_N(\gamma^{2j_\delta+1}\beta^k)f_M(\gamma^{2j_\delta+1}\beta^k)\\
                &=f_2\theta^{\frac{e_1}{2}(2j_\delta+1)(f_2-1)}\theta^{-ke_2}\prod_{l=1,l\neq \delta}^t(\gamma^{(2j_\delta+1)f_2}-\gamma^{(2j_l+1)f_2})(-1)^s\prod_{h=1}^{s}(\beta^{f_1k}+\beta^{f_1i_h})\\
                &=f_2\theta^{\frac{e_1}{2}(2j_\delta+1)(f_2-1)}\theta^{-ke_2}\theta^{\frac{(t-1)e_1f_2}{2}}L_\mathbf{b}(\alpha^{j_\delta f_2})(-1)^s\prod_{h=1}^{s}(\theta^{e_2f_1k}+\theta^{e_2f_1i_h}),\\
	\end{split}\]
where $\mathbf{b}=(\alpha^{j_1f_2},\alpha^{j_2f_2},\cdots,\alpha^{j_{t}f_2})$.
Since $2e_2\mid e_1(r-1)$, then $\alpha^{f_2}\in \F_r^*$, it deduces that $L_\mathbf{b}(\alpha^{j_\delta f_2})\in \F_r^*$.
A similar calculation as above means that
\[\begin{split}
\eta(f_2\theta^{\frac{e_1}{2}(2j_\delta+1)(f_2-1)}\theta^{-ke_2}\theta^{\frac{(t-1)e_1f_2}{2}}L_\mathbf{b}(\alpha^{j_\delta f_2})(-1)^s)=1.
	\end{split}\]
Hence, it is sufficient to consider 
$\prod_{h=1}^{s}(\theta^{e_2f_1k}+\theta^{e_2f_1i_h})$ and we denote it by $\omega$, i.e., $\omega=\prod_{h=1}^{s}(\theta^{e_2f_1k}+\theta^{e_2f_1i_h})$.
Since $e_1\mid e_2(r+1)$, then $\theta^{e_2f_1r}=\theta^{-e_2f_1}$ and
\[\begin{split}
	 \omega^r&=\prod_{h=1}^s \frac{\theta^{e_2f_1k}+\theta^{e_2f_1i_h}}{\theta^{e_2f_1(k+i_h)}}.\\
             &=\theta^{-e_2f_1(sk+I)}\omega,\\
\end{split}\]
where $I=\sum_{\mu=1}^s i_\mu$. Moreover, we can know that $\omega^{r-1}=\theta^{-e_2f_1(sk+I)}$ and
$$\omega=\theta^{-\frac{e_2(r+1)}{e_1}(sk+I)+c(r+1)},$$
where $c$ is some integer.
Considering that $\frac{e_2(r+1)}{e_1}$ is even, we have
\begin{equation}\label{eqqq3}
\eta(L_S(\gamma^{2j_\delta+1}\beta^k))=\eta(\omega)=1,
\end{equation}
which implies that $\eta(L_S(a))$ are the same for all $a\in S$ if $4\mid (s-1)(r+1)$.

Therefore, by Lemma \ref{rem1}, there exists a $q$-ary MDS self-dual code of length $n_1$ under the given conditions.

(2) By the proof of (1), we have
$$\eta(-L_S(\beta^{i_\lambda}\alpha^j))=\eta(-L_S(\gamma^{2j_\delta+1}\beta^k))=1.$$ It follows that $\eta(-L_S(a))=1$ for all $a\in S$.

Therefore, by Lemma \ref{lem2}, there exists a $q$-ary MDS self-dual code of length $n_1+1$ under the given conditions.

(3) It is easy to check that for $1\leq \lambda \leq s$ and $1\leq j\leq f_1$,
\[\begin{split}
	\eta(-\beta^{i_\lambda}\alpha^j L_S(\beta^{i_\lambda}\alpha^j))=(-1)^{\frac{(r+1)(s-1)}{2}}=1.
	\end{split}\]
For $1\leq \delta \leq t$ and $1\leq k\leq f_2$,
\[\begin{split}
	\eta(-\gamma^{2j_\delta+1}\beta^k L_S(\gamma^{2j_\delta+1}\beta^k))=\eta(\gamma^{f_2(2j_\delta+1)})=1
	\end{split}\]
and
\[\begin{split}
    \eta(-\prod_{j=1}^{f_1}\prod_{\mu=1}^{s}(\beta^{i_\mu}\alpha^j)\prod_{j=1}^{f_2}\prod_{\nu=1}^{t}(\gamma^{2j_\nu+1}\beta^j))=1.
\end{split}\]
And we note that $\eta(-1)=1$.

Therefore, by Lemma \ref{lem6}, there exists a $q$-ary MDS self-dual code of length $n_1+2$ under the given conditions. This completes the proof.
\end{proof}

Next, we present our second construction as follows, which can also be used to obtain three other classes of MDS self-dual codes.

\begin{theorem}\label{th 3}
Let $q=r^2$, where $r$ is an odd prime power.
Let $e_1\equiv 2^l({\rm mod}~2^{l+1})$ and $2^l\mid e_2$, where $l\geq 2$.
Suppose $2e_2\mid e_1(r+1)$ and $e_1\mid e_2(r-1)$. Let $1\leq s \leq D_1$, $1\leq t \leq D_2$ and $n_1=sf_1+tf_2$. The following statements hold.
\begin{itemize}
\item[(1)] If both $\frac{r+1}{2}(\frac{te_1}{e_2}+1)$ and $n_1$ are even, there exists a $q$-ary MDS self-dual code of length $n_1$.
\item[(2)] If both $\frac{r+1}{2}(\frac{te_1}{e_2}+t)$ and $\frac{(t-1)(r+1)}{2}$ are even, but $n_1$ is odd, there exists a $q$-ary MDS self-dual code of length $n_1+1$.
\item[(3)] If $\frac{r+1}{2}(\frac{te_1}{e_2}+t)$, $\frac{(t-1)(r+1)}{2}$ and $n_1$ are all even, there exists a $q$-ary MDS self-dual code of length $n_1+2$.
\end{itemize}
\end{theorem}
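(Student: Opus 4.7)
The plan is to mirror the proof strategy of Theorem \ref{th 2} while adapting every computation to the swapped divisibility hypotheses $2e_2 \mid e_1(r+1)$ and $e_1 \mid e_2(r-1)$. Concretely, I would keep the same evaluation set
\[
    S = M \cup N, \quad M = \bigcup_{\mu=1}^{s}\beta^{i_\mu}A, \quad N = \bigcup_{\nu=1}^{t}\gamma^{2j_\nu+1}B,
\]
with $\gamma = \theta^{e_1/2}$, the indices $i_\mu$ distinct modulo $D_1$ and $j_\nu$ distinct modulo $D_2$. Lemma \ref{mod butong} ensures the cosets inside $M$ and $N$ are genuinely distinct, and the argument that $M \cap N = \emptyset$ depends only on the 2-adic hypotheses $e_1 \equiv 2^l \pmod{2^{l+1}}$ and $2^l \mid e_2$, so it transfers verbatim from Theorem \ref{th 2}. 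This yields $|S| = n_1 = sf_1 + tf_2$ immediately.

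The quadratic-character computation, however, distributes differently. For $a = \beta^{i_\lambda}\alpha^j \in M$ I would use $L_S(a) = L_M(a)\,f_N(a)$ from Lemma \ref{lem5}. The new hypothesis $e_1 \mid e_2(r-1)$ forces $\beta^{f_1} \in \F_r^*$, so the auxiliary vector $\mathbf{a} = (\beta^{i_1 f_1},\dots,\beta^{i_s f_1})$ sits inside $\F_r^*$ and $L_{\mathbf{a}}(\beta^{i_\lambda f_1})$ is automatically a square --- the opposite of what happens in Theorem \ref{th 2}. The non-trivial contribution now comes from $f_N(a) = \prod_{h=1}^{t}\bigl(\theta^{e_1 f_2 j} - \theta^{e_1 f_2(2j_h+1)/2}\bigr)$: under $2e_2 \mid e_1(r+1)$ every factor lies in the $(r+1)$-th roots of unity, so raising $f_N(a)$ to the $r$-th power returns it to a controlled form, and computing $f_N(a)^{(r-1)(r+1)/2}$ produces exactly a factor of the shape $(-1)^{\frac{r+1}{2}(\tfrac{te_1}{e_2}+1)}$. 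Symmetrically, for $a = \gamma^{2j_\delta+1}\beta^k \in N$ I would decompose $L_S(a) = L_N(a)\,f_M(a)$: the factor $f_M(a) = (-1)^s\prod_{h=1}^{s}(\beta^{f_1 k}+\beta^{f_1 i_h})$ is now automatically in $\F_r^*$, so everything non-trivial sits in $L_N(a)$, and the $r$-power trick applied to $\gamma^{f_2}$ (an $(r+1)$-th root of unity) produces the same parity expression, confirming that $\eta(L_S(a))$ is a single common value across $S$ precisely when $\frac{r+1}{2}(\tfrac{te_1}{e_2}+1)$ is even.

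For part (1), the combined calculation gives $\eta(L_S(a)) = 1$ for every $a \in S$, and Lemma \ref{rem1} furnishes the MDS self-dual code of length $n_1$. For part (2), Lemma \ref{lem2} requires $\eta(-L_S(a)) = 1$; the factor $\eta(-1) = (-1)^{(r-1)/2}$ combined with the shift from $+1$ to $+t$ (arising because the odd parity of $n_1$ forces the parity count over the $N$-side to accumulate one copy per coset rather than a single term) produces the condition $4 \mid (t-1)(r+1)$ as the extra ingredient. Part (3) then follows from Lemma \ref{lem6} by verifying both $\eta(-aL_S(a)) = 1$ for every $a \in S$ and $\eta\bigl(-\prod_{a \in S}a\bigr) = 1$, each of which reduces to a product over $(r+1)$-th roots and over $\F_r^*$ and yields the same parity constraints. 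The main obstacle will be the $(r+1)$-th-root-of-unity bookkeeping on the $N$-side: unlike Theorem \ref{th 2}, where $\alpha^{f_2} \in \F_r^*$ collapses the analysis, here the odd shift $2j_\delta+1$ interacts non-trivially with the $(r+1)/2$ exponent, and isolating the precise term $\tfrac{te_1}{e_2}$ requires carefully tracking how each factor of $f_N$ contributes to the 2-adic valuation of the exponent of $\theta$. Once that valuation count is pinned down, matching the three resulting parities to Lemmas \ref{rem1}, \ref{lem2}, and \ref{lem6} is routine.
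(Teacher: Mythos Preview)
Your scaffold is exactly the paper's: same evaluation set $S=M\cup N$, disjointness via the $2$-adic argument, $L_S$ split via Lemma~\ref{lem5}, and the $r$-th-power trick to extract $\eta$. But the predicted outputs of the character computation are off in ways that would derail the matching with the hypotheses.

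On the $M$-side, $e_1\mid e_2(r-1)$ does make $\beta^{f_1}\in\F_r^*$ and hence $L_{\mathbf a}(\beta^{i_\lambda f_1})$ a square, as you say. However, the product $w=f_N(\beta^{i_\lambda}\alpha^j)=\prod_{h=1}^{t}(\theta^{e_1f_2j}-\theta^{e_1f_2(2j_h+1)/2})$ does \emph{not} contribute $(-1)^{\frac{r+1}{2}(\frac{te_1}{e_2}+1)}$: the $r$-power calculation gives $w^{r-1}=\theta^{\frac{r^2-1}{2}t-\frac{e_1f_2}{2}(2tj+t+2J)}$, so $\eta(w)=(-1)^{\frac{r+1}{2}(\frac{te_1}{e_2}+t)}$. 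On the $N$-side the situation is \emph{not} symmetric: $f_M(\gamma^{2j_\delta+1}\beta^k)$ lies in $\F_r^*$ as you note, but the nontrivial piece is $L_{\mathbf b}(\alpha^{j_\delta f_2})$ with $\mathbf b=(\alpha^{j_1f_2},\dots,\alpha^{j_tf_2})$, and here $\alpha^{f_2}$ satisfies $(\alpha^{f_2})^r=\alpha^{-f_2}$ (it is \emph{not} in $\F_r^*$ under the present hypotheses). The $r$-power trick yields $\eta(L_S(\gamma^{2j_\delta+1}\beta^k))=(-1)^{\frac{(t-1)(r+1)}{2}}$, a genuinely different expression from the $M$-side.

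This explains the shape of the hypotheses, which you have misread. For (1), Lemma~\ref{rem1} only requires all $\eta(L_S(a))$ to be \emph{equal}, not equal to $1$; the condition ``$\frac{r+1}{2}(\frac{te_1}{e_2}+1)$ even'' is exactly what forces $\frac{r+1}{2}(\frac{te_1}{e_2}+t)\equiv\frac{(t-1)(r+1)}{2}\pmod 2$. For (2) and (3), Lemmas~\ref{lem2} and~\ref{lem6} demand $\eta(-L_S(a))=1$ and $\eta(-aL_S(a))=1$ respectively, so each side must separately be $1$; that is why two conditions appear instead of one. Your invocation of $\eta(-1)$ is a red herring: over $\F_{r^2}$ one always has $\eta(-1)=1$, so it plays no role in distinguishing (1) from (2).
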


\begin{proof}
(1) Let $\theta$ be a primitive element of $\F_q$ and $\gamma=\theta^{\frac{e_1}{2}}$. Again, denote
\begin{equation}\label{eqqq4}
	M=\bigcup_{\mu=1}^{s}\beta^{i_\mu}A,\  N=\bigcup_{\nu=1}^{t}\gamma^{2j_\nu+1}B\  {\rm{and}}\ S=M\cup N,
\end{equation}
where $i_1, i_2,\dots, i_s$ are distinct module $D_1$ and $j_1, j_2,\dots, j_t$ are distinct module $D_2$.
By Lemma \ref{mod butong}, $\beta^{i_\mu}A\ (1\leq \mu \leq s)$, $\gamma^{2j_\nu+1}B\ (1\leq \nu\leq t)$ are distinct coests of $A$, $B$, respectively.
Taking a similar manner to Theorem \ref{th 2}, we have $M\cap N=\emptyset$. Let $f_S(x)=f_M(x)f_N(x)$ be the same as before. By Lemma \ref{lem5},
 \[\begin{split}
	 L_S(x)=f'_S(x)=f'_M(x)f_N(x)+f_M(x)f'_N(x).
\end{split}\]

On one hand, let $\beta^{i_\lambda}\alpha^j\in M$ for some $1\leq \lambda \leq s$ and $1\leq j\leq f_1$, then
\[\begin{split}
	 L_S(\beta^{i_\lambda}\alpha^j)&=L_M(\beta^{i_\lambda}\alpha^j)f_N(\beta^{i_\lambda}\alpha^j)\\
                &=f_1\theta^{e_2i_\lambda(f_1-1)}\theta^{-je_1}L_\mathbf{a}(\beta^{i_\lambda f_1})\prod_{h=1}^{t}(\theta^{e_1f_2j}-\theta^{\frac{e_1}{2}(2j_h+1)f_2}),\\
\end{split}\]
where $\mathbf{a}=(\beta^{i_1f_1},\beta^{i_2f_1},\cdots,\beta^{i_{s}f_1})$.
Note that $e_1\mid e_2(r-1)$, which implies that $\theta^{e_2f_1}\in \F_r^*$ and
$$\eta(f_1\theta^{e_2i_\lambda(f_1-1)}\theta^{-je_1}L_\mathbf{a}(\beta^{i_\lambda f_1}))=1.$$ Set $w\triangleq\prod_{h=1}^{t}(\theta^{e_1f_2j}-\theta^{\frac{e_1}{2}(2j_h+1)f_2})$
and note that $(\theta^{\frac{e_1f_2}{2}})^r=\theta^{\frac{e_1(r+1)(r^2-r)}{2e_2}}=\theta^{-\frac{e_1f_2}{2}}$ for $2e_2\mid e_1(r+1)$,
then $w^r$ can be calculated as
\[\begin{split}
      w^r&=\prod_{h=1}^{t}(\theta^{-e_1f_2j}-\theta^{-\frac{e_1}{2}(2j_h+1)f_2})\\
         &=\prod_{h=1}^{t}(-\frac{\theta^{e_1f_2j}-\theta^{\frac{e_1}{2}(2j_h+1)f_2}}{\theta^{\frac{e_1f_2}{2}(2j+2j_h+1)}})\\
         &=(-1)^t\theta^{-\frac{e_1f_2}{2}(2tj+t+2J)}w,
\end{split}\]
where $J=\sum_{h=1}^t j_h$. Clearly,
\[\begin{split}
       w^{r-1}=\theta^{\frac{r^2-1}{2}t-\frac{e_1f_2}{2}(2tj+t+2J)}
\end{split}\]
and
\[\begin{split}
       w=\theta^{\frac{r+1}{2}t-\frac{e_1(r+1)}{2e_2}(2tj+t+2J)+c(r+1)},
	\end{split}\]
where $c$ is some integer.
Moreover,
\[\begin{split}
       \eta(w)=(-1)^{\frac{r+1}{2}t-\frac{e_1(r+1)}{2e_2}t}=(-1)^{\frac{r+1}{2}(\frac{te_1}{e_2}+t)}
\end{split}\]
and
\begin{equation}\label{eqqq5}
\eta(L_S(\beta^{i_\lambda}\alpha^j))=\eta(w)=(-1)^{\frac{r+1}{2}(\frac{te_1}{e_2}+t)}.
\end{equation}

On the other hand, let $\gamma^{2j_\delta+1}\beta^k\in N$ for some $1\leq \delta \leq t$ and $1\leq k\leq f_2$, then
 \[\begin{split}
	 L_S(\gamma^{2j_\delta+1}\beta^k)&=L_N(\gamma^{2j_\delta+1}\beta^k)f_M(\gamma^{2j_\delta+1}\beta^k)\\
                &=f_2\theta^{\frac{e_1}{2}(2j_\delta+1)(f_2-1)}\theta^{-ke_2}\theta^{\frac{(t-1)e_1f_2}{2}}L_\mathbf{b}(\alpha^{j_\delta f_2})(-1)^s\prod_{h=1}^{s}(\theta^{e_2f_1k}+\theta^{e_2f_1i_h}),\\
\end{split}\]
where $\mathbf{b}=(\alpha^{j_1f_2},\alpha^{j_2f_2},\cdots,\alpha^{j_{t}f_2})$.
Similar to the proof of Theorem \ref{th 2} again, we can derive
\[\begin{split}
      \eta(f_2\theta^{\frac{e_1}{2}(2j_\delta+1)(f_2-1)}\theta^{-ke_2}\theta^{\frac{(t-1)e_1f_2}{2}}(-1)^s\prod_{h=1}^{s}(\theta^{e_2f_1k}+\theta^{e_2f_1i_h}))=1
\end{split}\]
from the fact $e_1\mid e_2(r-1)$. Hence, it is still sufficient to consider $L_\mathbf{b}(\alpha^{j_\delta f_2})$.
Since $(r-1)\mid e_1f_2$, then
\[\begin{split}
	        (L_{\mathbf{b}}(\alpha^{j_\delta f_2}))^r &= \prod_{\mu=1 ,\mu\neq \delta}^{t}(\alpha^{-j_\delta f_2}-\alpha^{-j_\mu f_2}) \\
                                              &= \prod_{\mu=1 ,\mu\neq \delta}^{t}\frac{\alpha^{j_\mu f_2}-\alpha^{j_\delta f_2}}{\alpha^{(j_\delta+j_\mu)f_2}}\\
                                              &= (-1)^{t-1}\alpha^{-j_\delta f_2(t-2)-f_2 J} L_{\mathbf{b}}(\alpha^{j_\delta f_2}),\\
\end{split}\]
where $J=\sum_{\mu=1}^t j_\mu$.
It follows that
\[\begin{split}
(L_\mathbf{b}(\alpha^{j_\delta f_2}))^{r-1}=\theta^{\frac{(t-1)(r^2-1)}{2}-e_1f_2[j_\delta (s-2)+J]}\\
	\end{split}\]
and
\[\begin{split}
      L_\mathbf{b}(\alpha^{j_\delta f_2})=\theta^{\frac{(t-1)(r+1)}{2}-\frac{e_1(r+1)}{e_2}[j_\delta (s-2)+J]+(r+1)c},
	\end{split}\]
where $c$ is some integer.
Therefore,
\begin{equation}\label{eqqq6}
\eta(L_S(\gamma^{2j_\delta+1}\beta^k))=(-1)^{\frac{(t-1)(r+1)}{2}},
\end{equation}
which yields $\frac{(t-1)(r+1)}{2}+\frac{r+1}{2}(\frac{te_1}{e_2}+t)$ is even if $\frac{r+1}{2}(\frac{te_1}{e_2}+1)$ is even and
it can further deduce that $\eta(L_S(a))$ are the same for all $a\in S$.

Therefore, by Lemma \ref{rem1}, there exists a $q$-ary MDS self-dual code of length $n_1$ under the given conditions.

(2) By the proof of (1), we have
$$\eta(-L_S(\beta^{i_\lambda}\alpha^j))=(-1)^{\frac{r+1}{2}(\frac{te_1}{e_2}+t)}=1$$
and
$$\eta(-L_S(\gamma^{2j_\delta+1}\beta^k))=(-1)^{\frac{(t-1)(r+1)}{2}}=1.$$
It follows that $\eta(-L_S(a))=1$ for all $a\in S$.

Therefore, by Lemma \ref{lem2}, there exists a $q$-ary MDS self-dual code of length $n_1+1$ under the given conditions.

(3) It is easy to check that for $1\leq \lambda \leq s$ and $1\leq j\leq f_1$,
\[\begin{split}
	\eta(-\beta^{i_\lambda}\alpha^j L_S(\beta^{i_\lambda}\alpha^j))=(-1)^{\frac{r+1}{2}(\frac{te_1}{e_2}+t)}=1.
	\end{split}\]
For $1\leq \delta \leq t$ and $1\leq k\leq f_2$,
\[\begin{split}
	\eta(-\gamma^{2j_\delta+1}\beta^k L_S(\gamma^{2j_\delta+1}\beta^k))=(-1)^{\frac{(t-1)(r+1)}{2}}=1
	\end{split}\]
and
\[\begin{split} \eta(-\prod_{j=1}^{f_1}\prod_{\mu=1}^{s}(\beta^{i_\mu}\alpha^j)\prod_{j=1}^{f_2}\prod_{\nu=1}^{t}(\gamma^{2j_\nu+1}\beta^j))=1.
	\end{split}\]
And we note that $\eta(-1)=1$.

Therefore, by Lemma \ref{lem6}, there exists a $q$-ary MDS self-dual code of length $n_1+2$ under the given conditions.  This completes the proof.
\end{proof}

\section{Comparisons and Examples}\label{sec4}

In this section,
we make a comparison of our results with the previous results
and give some examples.
In Table \ref{tab:4}, we list the ratios of $N$ and $\frac{q}{2}$
for some $q$, where $N$ is the number of possible lengths of
$q$-ary MDS self-dual codes in different constructions.

\begin{table}[h]
\caption{The ratios of $N$ and $\frac{q}{2}$}
\label{tab:4}
\begin{center}
\resizebox{\textwidth}{23mm}{
	\begin{tabular}{c |c |c| c| c| c}
		\hline
		 $r$  & $q$ & $N/(\frac{q}{2})$ of \cite{RefJ18} & $N/(\frac{q}{2})$ of \cite{RefJ20}& $N/(\frac{q}{2})$ of us and \cite{RefJ20} & \tabincell{c}{number of new lengths of \\us compared with \cite{RefJ20}} \\
		\hline
         149 &   22201  &  25\%  & 38.61\%  & 57.16\% & 2060\\
         151 &   22801  &  25\%  & 34.95\%  & 57.47\% & 2568 \\
         157 &   24649  &  25\%  & 34.95\%  & 57.10\% & 2731 \\
         163 &   26569  &  25\%  & 34.28\%  & 57.24\% & 3050 \\
         167 &   27889  &  25\%  & 34.27\%  & 57.36\% & 3219\\
		\hline
	\end{tabular}}
    \begin{tablenotes}
     \footnotesize
    \item Note: $N$ is the number of possible lengths of $q$-ary MDS self-dual codes.
    \end{tablenotes}
\end{center}
\end{table}

According to Table \ref{tab:4},
the constructions in \cite{RefJ18} can contribute 25\% of all possible $q$-ary MDS self-dual codes.
And in \cite{RefJ20}, the constructions can contribute more than 34\% of all possible $q$-ary MDS self-dual codes.
Inspired by the idea in \cite{RefJ20}, we construct six new classes of MDS self-dual codes in Theorems \ref{th 2}
and \ref{th 3}. After a detailed computer search, we find that some possible lengths in our results cannot be obtained
in \cite{RefJ20}. In the last column of Table \ref{tab:4}, we list the number of possible lengths in our constructions but
not in the constructions in \cite{RefJ20}. Of course, there are also some lengths that can be obtained from \cite{RefJ20},
but not from our constructions. However, if we consider with our results and conclusions in \cite{RefJ20} together,
we can find that $N/(\frac{q}{2})$ is generally more than 57\%, which is the largest ratio as far as we know.

More intuitively, we give some new concrete examples of MDS self-dual codes in Examples \ref{exam1} and \ref{exam2}.

\begin{example}\label{exam1}
Let $q=149^2$, by Theorem \ref{th 2},
there exists an MDS self-dual code of length $n=7504,\ 8180,\ 4944,\ 6172,\ 9018 \cdots$.
Calculated by Magma, these lengths are new and cannot be obtained from previous constructions listed in Table \ref{tab:2}.
\end{example}

\begin{example}\label{exam2}
Let $q=151^2$, by Theorem \ref{th 3},
there exists an MDS self-dual code of length $n=7148,\ 9592,\ 6616,\ 10040,\ 8288 \cdots$.
Calculated by Magma, these lengths are new and cannot be obtained from previous constructions listed in Table \ref{tab:2}.
\end{example}

\section{Constructions of MDS self-orthogonal codes and MDS almost self-dual codes}\label{sec5}

In this section, we apply Lemmas \ref{lem self-orth 1} and \ref{lem self-orth 2}
to construct some new MDS self-orthogonal codes and MDS almost self-dual codes.

\begin{theorem}\label{th 4}
Let $q=r^2$, where $r$ is an odd prime power.
Let $e_1\equiv 2^{l}({\rm mod}~2^{l+1})$ and $2^{l}\mid e_2$, where $l\geq 2$.
Suppose $2e_2\mid e_1(r-1)$ and $e_1\mid e_2(r+1)$.
Let $1\leq s \leq D_1$, $1\leq t \leq D_2$, $4\mid (s-1)(r+1)$ and $n_1=sf_1+tf_2$. The following statements hold.
\begin{itemize}
\item[(1)] If $n_1$ is even, for $1\leq k\leq \frac{n_1}{2}-1$, there exists a $q$-ary $[n_1,k]$ MDS self-orthogonal code.
\item[(2)] If $n_1$ is even, there exists a $q$-ary MDS almost self-dual code of length $n_1+1$.
\item[(3)] If $n_1+1$ is even, for $1\leq k\leq \frac{n_1+1}{2}-1$, there exists a $q$-ary $[n_1+1,k]$ MDS self-orthogonal code.
\end{itemize}
\end{theorem}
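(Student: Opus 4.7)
The plan is to reuse the evaluation set $S = M \cup N$ constructed in the proof of Theorem \ref{th 2} and apply the self-orthogonality criteria of Lemmas \ref{lem self-orth 1} and \ref{lem self-orth 2} to it. Two facts from what precedes already do most of the work. First, Eqs.~(\ref{eqqq2}) and (\ref{eqqq3}), together with the hypothesis $4 \mid (s-1)(r+1)$, show that $\eta(L_S(a)) = 1$ for every $a \in S$. Second, every element of $S$ lies in $(\F_q^*)^2$: since $e_1 \equiv 2^l \pmod{2^{l+1}}$ and $2^l \mid e_2$ with $l \geq 2$, each of $e_1$, $e_2$ and $e_1/2$ is even, so $\alpha = \theta^{e_1}$, $\beta = \theta^{e_2}$ and $\gamma = \theta^{e_1/2}$ are squares, and therefore every element of $M$ and of $N$ is a square. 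Finally, $q = r^2 \equiv 1 \pmod 4$, so $\eta(-1) = 1$.

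For part~(1), list $S$ as $\mathbf{a} = (\alpha_1,\dots,\alpha_{n_1})$. For a fixed $1 \leq k \leq n_1/2 - 1$, apply Lemma \ref{lem self-orth 1} with the monomial test polynomial $\omega(x) = x^{n_1 - 2k}$. Because $n_1$ is even, the exponent $n_1 - 2k$ is even, and for every $\alpha_i \in S \subseteq \F_q^*$ the value $\omega(\alpha_i) = \alpha_i^{n_1 - 2k}$ is a nonzero square, so $\eta(\omega(\alpha_i) L_{\mathbf{a}}(\alpha_i)) = 1 \cdot 1 = 1$. This yields the required $[n_1, k]$ MDS self-orthogonal GRS code.

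For parts~(2) and~(3), extend the evaluation set by $\infty$ and invoke Lemma \ref{lem self-orth 2}, whose test polynomial $\omega(x)$ has leading term $-x^{n_1 - 2k + 1}$. In part~(2) the choice $k = n_1/2$ forces $\omega$ to have degree $1$; take $\omega(x) = -x$, and then $\eta(\omega(\alpha_i) L_{\mathbf{a}}(\alpha_i)) = \eta(-1)\eta(\alpha_i)\eta(L_{\mathbf{a}}(\alpha_i)) = 1$. The resulting extended GRS code has length $n_1 + 1$, dimension $n_1/2$, and dual dimension $n_1/2 + 1$, which is precisely the MDS almost self-dual condition. In part~(3), $n_1$ is odd, so $n_1 - 2k + 1$ is even; take $\omega(x) = -x^{n_1 - 2k + 1}$, and then $\omega(\alpha_i)$ is $-1$ times an even power of a nonzero element, hence a square, so $\eta(\omega(\alpha_i) L_{\mathbf{a}}(\alpha_i)) = 1$ once more, producing the required $[n_1 + 1, k]$ MDS self-orthogonal extended GRS code.

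No step looks genuinely difficult once $S$ from Theorem \ref{th 2} is in hand, since essentially all the quadratic-character content has been extracted there. The main obstacle I expect is careful bookkeeping: verifying that the 2-adic hypotheses on $e_1, e_2$ really do place every element of $S$ inside $(\F_q^*)^2$, matching the prescribed degree of $\omega$ to the parity of $n_1$ so that a single monomial (or $-1$ times a monomial) suffices, and confirming that in part~(2) the dimension $k = n_1/2$ is exactly what turns a length-$(n_1+1)$ self-orthogonal extended GRS code into an almost self-dual code.
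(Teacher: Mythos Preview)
Your proposal is correct, and for parts~(1) and~(2) it is essentially the paper's argument: the paper also chooses a monomial test polynomial ($\omega(x)=x$ for~(1), $\omega(x)=-x$ for~(2)) and invokes Lemmas~\ref{lem self-orth 1} and~\ref{lem self-orth 2}, respectively, using the character computations already carried out in Theorem~\ref{th 2}. Your explicit observation that the $2$-adic hypotheses force $e_1$, $e_2$ and $e_1/2$ all to be even, so $S\subseteq(\F_q^*)^2$, is a clean way to package what the paper verifies case by case.

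For part~(3) you take a genuinely different route. The paper adjoins $0$ to the evaluation set, works with $\tilde S=S\cup\{0\}$, and applies Lemma~\ref{lem self-orth 1} with the \emph{constant} polynomial $\omega(x)=\theta^{e_1/2}$; this requires a separate check that $\eta(\omega(0)L_{\tilde S}(0))=\eta\bigl(\prod_{b\in S}(-b)\bigr)=1$. You instead adjoin $\infty$, stay with the original $S$, and apply Lemma~\ref{lem self-orth 2} with $\omega(x)=-x^{n_1-2k+1}$, relying on the parity of $n_1-2k+1$. Your route avoids the extra computation at the new point $0$, while the paper's route avoids the leading-coefficient constraint in Lemma~\ref{lem self-orth 2}; both are valid and of comparable length.
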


\begin{proof}
(1) Let $S$ be defined as Eq. (\ref{eqqq1}). Naturally, Eqs. (\ref{eqqq2}) and (\ref{eqqq3}) follow.
Set $\omega(x)=x$, then
for $1\leq \lambda \leq s$ and $1\leq j\leq f_1$,
\[\begin{split}
	\eta(\omega(\beta^{i_\lambda}\alpha^j)L_S(\beta^{i_\lambda}\alpha^j))=(-1)^{\frac{(r+1)(s-1)}{2}}=1.
	\end{split}\]
For $1\leq \delta \leq t$ and $1\leq k\leq f_2$,
\[\begin{split}
	\eta(\omega(\gamma^{2j_\delta+1}\beta^k)L_S(\gamma^{2j_\delta+1}\beta^k))=\eta(\gamma^{f_2(2j_\delta+1)})=1.
	\end{split}\]
It follows that there exists a nonzero polynomial $\omega(x)$ such that $\eta(\omega(a)L_S(a))=1$ for all $a\in S$.

Therefore, by Lemma \ref{lem self-orth 1}, there exists a $q$-ary $[n_1,k]$ MDS self-orthogonal code under the given conditions.

(2) By the proof of (1), set $\omega(x)=-x$ here. Similarly, we can prove that  $\eta(\omega(a)L_S(a))=1$ for all $a\in S$.
Therefore, by Lemma \ref{lem self-orth 2}, there exists a $q$-ary $[n_1+1,\frac{n_1}{2}]$ MDS self-orthogonal code. Moreover, note that
the dual code is a $q$-ary $[n_1+1,\frac{n_1}{2}+1]$ MDS code, then according to the definition of almost self-dual codes, each $q$-ary $[n_1+1,\frac{n_1}{2}]$
MDS self-orthogonal code is actually a $q$-ary MDS almost self-dual code.
Therefore, there exists a $q$-ary MDS almost self-dual codes of length $n_1+1$ under the given conditions.

(3) Denote $\tilde{S}=S\cup\{0\}$. Set $\omega(x)=\theta^{\frac{e_1}{2}}$, then
for $1\leq \lambda \leq s$ and $1\leq j\leq f_1$,
\[\begin{split}
	\eta(\omega(\beta^{i_\lambda}\alpha^j)L_{\tilde{S}}(\beta^{i_\lambda}\alpha^j))=\eta(-\beta^{i_\lambda}\alpha^j\omega(\beta^{i_\lambda}\alpha^j) L_{S}(\beta^{i_\lambda}\alpha^j))=(-1)^{\frac{(r+1)(s-1)}{2}}=1.
	\end{split}\]
For $1\leq \delta \leq t$ and $1\leq k\leq f_2$,
\[\begin{split}
	\eta(\omega(\gamma^{2j_\delta+1}\beta^k)L_{\tilde{S}}(\gamma^{2j_\delta+1}\beta^k))=\eta(-\gamma^{2j_\delta+1}\beta^k\omega(\gamma^{2j_\delta+1}\beta^k)L_S(\gamma^{2j_\delta+1}\beta^k))=1
	\end{split}\]
and
\[\begin{split} \eta(\omega(0)L_{\tilde{S}}(0))=\eta(\prod_{j=1}^{f_1}\prod_{\mu=1}^{s}(\beta^{i_\mu}\alpha^j)\prod_{j=1}^{f_2}\prod_{\nu=1}^{t}(\gamma^{2j_\nu+1}\beta^j))=1.
	\end{split}\]
It follows that there exists a nonzero polynomial $\omega(x)$ such that $\eta(\omega(a)L_{\tilde{S}}(a))=1$ for all $a\in \tilde{S}$.
Therefore, by Lemma \ref{lem self-orth 1}, there exists a $q$-ary $[n_1+1,k]$ MDS self-orthogonal code under the given conditions. This completes the proof.
\end{proof}

\begin{theorem} \label{th 5}
Let $q=r^2$, where $r$ is an odd prime power.
Let $e_1\equiv 2^l({\rm mod}~2^{l+1})$ and $2^{l}\mid e_2$, where $l\geq 2$.
Suppose $2e_2\mid e_1(r+1)$ and $e_1\mid e_2(r-1)$. Let $1\leq s \leq D_1$, $1\leq t \leq D_2$ and $n_1=sf_1+tf_2$. The following statements hold.
\begin{itemize}
\item[(1)] If  both $\frac{r+1}{2}(\frac{te_1}{e_2}+1)$ and $n_1$ are even, for $1\leq k\leq \frac{n_1}{2}-1$, there exists a $q$-ary $[n_1,k]$-MDS self-orthogonal code.
\item[(2)] If  $\frac{t(r+1)e_1}{2e_2}$, $\frac{r+1}{2}$ and $n_1$ are all even, there exists a $q$-ary MDS almost self-dual code of length $n_1+1$.
\item[(3)] If  both $\frac{r+1}{2}(\frac{te_1}{e_2}+t)$ and $\frac{(t-1)(r+1)}{2}$ are even, but $n_1$ is odd, for $1\leq k\leq \frac{n_1+1}{2}-1$, there exists a $q$-ary $[n_1+1,k]$ MDS self-orthogonal code.
\end{itemize}
\end{theorem}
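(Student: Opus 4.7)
The plan is to follow the blueprint of Theorem~\ref{th 4}, now applied to the evaluation set $S$ of Eq.~(\ref{eqqq4}) with the character identities (\ref{eqqq5}) and (\ref{eqqq6}) already established inside the proof of Theorem~\ref{th 3}. A preliminary observation drives every part: since $l \ge 2$, each of $e_1$, $e_2$, and $\tfrac{e_1}{2}$ is divisible by $2$, so every element of $S$ has an even exponent with respect to $\theta$, and therefore $\eta(a) = 1$ for all $a \in S$. This lets me factor $\eta(\omega(a) L_S(a)) = \eta(\omega(a))\,\eta(L_S(a))$ and simply read off the needed sign from (\ref{eqqq5})--(\ref{eqqq6}).

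For part~(1), the condition $\frac{r+1}{2}(\frac{te_1}{e_2}+1)$ even is precisely what makes the two signs in (\ref{eqqq5}) and (\ref{eqqq6}) coincide; let $\delta \in \{\pm 1\}$ denote their common value. Choose $c \in \F_q^*$ with $\eta(c) = \delta$ and set $\omega(x) = c$; then $\eta(\omega(a) L_S(a)) = \delta^2 = 1$ for every $a \in S$. Since $\omega$ has degree $0 \le n_1 - 2k$, Lemma~\ref{lem self-orth 1} produces the required $[n_1, k]$ MDS self-orthogonal code for each $1 \le k \le \frac{n_1}{2} - 1$. For part~(2), the extra hypotheses $\frac{(r+1)te_1}{2e_2}$ and $\frac{r+1}{2}$ both even upgrade this: both exponents in (\ref{eqqq5}) and (\ref{eqqq6}) are now individually even, so $\eta(L_S(a)) = +1$ everywhere on $S$. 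Taking $\omega(x) = -x$, which has the leading form $-x^{n-2k+1}$ demanded by Lemma~\ref{lem self-orth 2} at $n = n_1$, $k = n_1/2$, one has $\eta(-a L_S(a)) = 1$ on $S$ thanks to $\eta(-1) = 1$. Lemma~\ref{lem self-orth 2} therefore makes $\GRS_{n_1/2}(\mathbf{a}, \mathbf{v}, \infty)$ self-orthogonal of length $n_1+1$ and dimension $n_1/2$; its dual is an $[n_1+1, n_1/2+1]$ MDS code, which is exactly the definition of an MDS almost self-dual code.

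For part~(3), with $n_1$ odd and both exponents in (\ref{eqqq5}), (\ref{eqqq6}) individually even, I enlarge to $\tilde S = S \cup \{0\}$ of size $n_1+1$. By Lemma~\ref{lem5}(1), $L_{\tilde S}(a) = a L_S(a)$ for $a \in S$ and $L_{\tilde S}(0) = f_S(0) = -\prod_{a \in S} a$ (using $n_1$ odd). Taking $\omega(x) = 1$, the condition $\eta(\omega(a) L_{\tilde S}(a)) = 1$ reduces on $S$ to $\eta(a)\,\eta(L_S(a)) = 1$ (immediate), and at $a = 0$ to $\eta\!\left(\prod_{a \in S} a\right) = 1$ (using $\eta(-1) = 1$). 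I would verify this last identity by writing each $a$ as a power of $\theta$ and summing exponents coset by coset; the total exponent is
\[
f_1 e_2 I + s e_1 \tfrac{f_1(f_1+1)}{2} + f_2 \tfrac{e_1}{2}(2J + t) + t e_2 \tfrac{f_2(f_2+1)}{2},
\]
where $I = \sum_\mu i_\mu$ and $J = \sum_\nu j_\nu$, and each summand is even because $e_1$, $e_2$, and $\tfrac{e_1}{2}$ are even. Lemma~\ref{lem self-orth 1} then yields $[n_1+1, k]$ MDS self-orthogonal codes for every $1 \le k \le \frac{n_1+1}{2} - 1$. The only genuinely new computation is this parity count in part~(3); the remaining arguments are direct transcriptions of Theorem~\ref{th 3}'s sign analysis into the self-orthogonality criteria of Lemmas~\ref{lem self-orth 1} and~\ref{lem self-orth 2}.
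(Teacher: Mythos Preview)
Your proposal is correct and follows the same blueprint as the paper: you use the evaluation set $S$ of Eq.~(\ref{eqqq4}), invoke the sign identities (\ref{eqqq5})--(\ref{eqqq6}) from Theorem~\ref{th 3}, and feed them into Lemmas~\ref{lem self-orth 1} and~\ref{lem self-orth 2}, exactly as the paper does. Parts~(2) and~(3) are essentially identical to the paper's argument (the paper takes $\omega(x)=\theta^{e_1/2}$ in (3), but since $e_1/2$ is even this has the same quadratic character as your $\omega(x)=1$).

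The one genuine difference is in Part~(1). The paper splits into two cases, $4\mid(r+1)$ and $4\nmid(r+1)$, and in each case exhibits a specific $\omega(x)$ (namely $x$ in the first case and the constant $\theta^{t(r+1)/2+1}$ in the second) tailored to force both signs to $+1$. You instead observe directly that the hypothesis $\tfrac{r+1}{2}\bigl(\tfrac{te_1}{e_2}+1\bigr)$ even is \emph{exactly} the condition that the two signs in (\ref{eqqq5}) and (\ref{eqqq6}) coincide, call their common value $\delta$, and take $\omega(x)=c$ with $\eta(c)=\delta$. This is cleaner: it avoids the case analysis entirely and makes transparent why the hypothesis is the right one. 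The paper's choice has the mild advantage of giving an explicit $\omega$, but both arguments are equally valid.
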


\begin{proof}
(1) Let $S$ be defined as Eq. (\ref{eqqq4}). Naturally, Eqs. (\ref{eqqq5}) and (\ref{eqqq6}) hold. We divide our proof
into the following two parts according to whether 4 is divisible by $r+1$.

$\bullet$ $\textbf{Case 1:}$ $4\mid (r+1)$.

Set $\omega(x)=x$, then for $1\leq \lambda \leq s$ and $1\leq j\leq f_1$,
\[\begin{split}
	\eta(\omega(\beta^{i_\lambda}\alpha^j)L_S(\beta^{i_\lambda}\alpha^j))=(-1)^{\frac{r+1}{2}(\frac{te_1}{e_2}+t)}=1.
	\end{split}\]
For $1\leq \delta \leq t$ and $1\leq k\leq f_2$,
\[\begin{split}
	\eta(\omega(\gamma^{2j_\delta+1}\beta^k)L_S(\gamma^{2j_\delta+1}\beta^k))=(-1)^{\frac{(t-1)(r+1)}{2}}=1.
	\end{split}\]

$\bullet$ $\textbf{Case 2:}$ $4\nmid (r+1)$.

Set $\omega(x)=\theta^{t\frac{r+1}{2}+1}$, then for $1\leq \lambda \leq s$ and $1\leq j\leq f_1$,
\[\begin{split}
	\eta(\omega(\beta^{i_\lambda}\alpha^j)L_S(\beta^{i_\lambda}\alpha^j))=(-1)^{\frac{te_1(r+1)}{2e_2}+1}=1.
	\end{split}\]
For $1\leq \delta \leq t$ and $1\leq k\leq f_2$,
\[\begin{split}
	\eta(\omega(\gamma^{2j_\delta+1}\beta^k)L_S(\gamma^{2j_\delta+1}\beta^k))=(-1)^{\frac{(r+1)}{2}+1}=1.
	\end{split}\]
It follows that there exists a nonzero polynomial $\omega(x)$ such that $\eta(\omega(a)L_S(a))=1$ for all $a\in S$
whether 4 is divisible by $r+1$ or not.

Therefore, by Lemma \ref{lem self-orth 1}, there exists a $q$-ary $[n_1,k]$ MDS self-orthogonal code under the given conditions.

(2) By the proof of (1), set $\omega(x)=-x$. Similarly, we can prove that  $\eta(\omega(a)L_S(a))=1$ for all $a\in S$.
Therefore, by Lemma \ref{lem self-orth 2},
there exists a $q$-ary $[n_1+1,\frac{n_1}{2}]$ MDS self-orthogonal code. Moreover, note that
the dual code is a $q$-ary $[n_1+1,\frac{n_1}{2}+1]$ MDS code, then according to the definition of almost self-dual codes, each $q$-ary $[n_1+1,\frac{n_1}{2}]$
MDS self-orthogonal code is actually a $q$-ary MDS almost self-dual code.
Therefore, there exists a $q$-ary MDS almost self-dual codes of length $n_1+1$ under the given conditions.

(3) Denote $\tilde{S}=S\cup\{0\}$. Set $\omega(x)=\theta^{\frac{e_1}{2}}$, then for $1\leq \lambda \leq s$ and $1\leq j\leq f_1$,
\[\begin{split}
	\eta(\omega(\beta^{i_\lambda}\alpha^j)L_{\tilde{S}}(\beta^{i_\lambda}\alpha^j))=\eta(-\beta^{i_\lambda}\alpha^j\omega(\beta^{i_\lambda}\alpha^j) L_{S}(\beta^{i_\lambda}\alpha^j))=(-1)^{\frac{r+1}{2}(\frac{te_1}{e_2}+t)}=1.
	\end{split}\]
For $1\leq \delta \leq t$ and $1\leq k\leq f_2$,
\[\begin{split}
	\eta(\omega(\gamma^{2j_\delta+1}\beta^k)L_{\tilde{S}}(\gamma^{2j_\delta+1}\beta^k))\!=\!\eta(-\gamma^{2j_\delta+1}\beta^kL_S(\gamma^{2j_\delta+1}\beta^k))\!=\!(-1)^{\frac{(t-1)(r+1)}{2}}=1
	\end{split}\]
and
\[\begin{split} \eta(\omega(0)L_{\tilde{S}}(0))=\eta(\prod_{j=1}^{f_1}\prod_{\mu=1}^{s}(\beta^{i_\mu}\alpha^j)\prod_{j=1}^{f_2}\prod_{\nu=1}^{t}(\gamma^{2j_\nu+1}\beta^j))=1.
	\end{split}\]
It follows that there exists a nonzero polynomial $\omega(x)$ such that $\eta(\omega(a)L_{\tilde{S}}(a))=1$ for all $a\in \tilde{S}$.

Therefore, by Lemma \ref{lem self-orth 1}, there exists a $q$-ary $[n_1,k]$ MDS self-orthogonal code under the given conditions.  This completes the proof.
\end{proof}

\section{Summary and concluding remarks}\label{sec6}
In this paper, we develop the construction methods in \cite{RefJ20} and construct six new classes of MDS self-dual
codes over $\F_q$ via (extended) GRS codes, where $q=r^2$ and $r$ is an odd prime power (See Theorems \ref{th 2} and \ref{th 3}).
Combined with \cite{RefJ20}, the ratio $N/(\frac{q}{2})$ is generally more than 57\%, which is the largest known ratio as far as we know
(See Table \ref{tab:4}). Moreover, by using the same evaluation sets, some new families of MDS self-orthogonal codes and MDS almost
self-dual codes are also given (See Theorems \ref{th 4} and \ref{th 5}). Our work enriches the study of MDS self-dual codes. 
The future work is to find the construction of MDS self-dual codes which can take up a larger proportion,
and even to completely slove the problem of the construction of MDS self-dual codes for all possible length.

\section*{Acknowledgments}
This research was supported by the National Natural Science Foundation of China (Nos.U21A20428 and 12171134).

\end{document}